\documentclass[twoside]{deon16}

\usepackage{deon16macro}

\usepackage{graphicx}
\usepackage{amsmath}
\usepackage{amssymb}

\usepackage{amsmath,stmaryrd}
\usepackage[all]{xy}
\usepackage{pgf}
\usepackage{tcolorbox}
\usepackage[utf8]{inputenc}	
\usepackage{enumerate}
\usepackage{rotating}
\usepackage{bussproofs}
\usepackage{tikz}
\usetikzlibrary{trees}
\usepackage{enumitem}
\usepackage{marvosym} 

\usepackage{tikz}
\usetikzlibrary{positioning,arrows,calc,decorations.markings,fit}

\tikzset{
modal/.style={>=stealth',shorten >=1pt,shorten <=1pt,auto,node distance=1.5cm,semithick},
world/.style={circle,draw,minimum size=0.5cm,fill=gray!15},
point/.style={circle,draw,inner sep=0.5mm,fill=black},
reflexive above/.style={->,loop,looseness=7,in=120,out=60},
reflexive below/.style={->,loop,looseness=7,in=240,out=300},
reflexive left/.style={->,loop,looseness=7,in=150,out=210},
reflexive right/.style={->,loop,looseness=7,in=30,out=330}
}

\usepackage{amssymb, upgreek}
\usepackage{mathtools}  
\usepackage{graphicx}
\usepackage{xfrac}
\usepackage[colorinlistoftodos]{todonotes}
\usepackage{color, multicol}
\usepackage{rotating}

\usepackage[scr]{rsfso}

\definecolor{green2}{RGB}{154, 205, 50}

\newcommand {\Diam}{\scalebox{0.9}{\rotatebox[origin=c]{45}{$\Box$}}}

\newcommand{\lang}{\mathcal{L}}

\newcommand{\Oi}{\otimes_{i}}
\newcommand{\lb}{\langle}
\newcommand{\rb}{\rangle}

\newcommand{\agdia}{\lb i \rb}
\newcommand{\agbox}{[i]}

\newcommand{\ODi}{\ominus_i}
\newcommand{\sti}{\mathsf{STIT}}
\newcommand{\opt}{R}

\newcommand{\baselogic}{\mathsf{DS}}
\newcommand{\ds}[1]{\mathsf{DS}_{#1}}
\newcommand{\dsx}[2]{\mathsf{DS}_{#1}{\mathsf{#2}}}
\newcommand{\gds}[2]{\mathsf{G3DS}_{#1}{\mathsf{#2}}}

\newcommand{\oilp}{\mathsf{OiLP}}
\newcommand{\oipp}{\mathsf{OiAP}}
\newcommand{\oia}{\mathsf{OiA}}
\newcommand{\oiv}{\mathsf{OiV}}
\newcommand{\oir}{\mathsf{OiR}}
\newcommand{\oio}{\mathsf{OiO}}
\newcommand{\oiao}{\mathsf{OiA{+}O}}
\newcommand{\oictrl}{\mathsf{OiCtrl}}
\newcommand{\oinc}{\mathsf{OiNC}}
\newcommand{\oina}{\mathsf{OiNA}}

\newcommand{\RT}{\textbf{RT}}

\newcommand{\disrule}{(\lor)}

\newcommand{\dtwo}{(\mathsf{D2}_{\text{$i$}})}
\newcommand{\dthree}{(\mathsf{D3}_{\text{$i$}})}
\newcommand{\dfour}{(\mathsf{D4}_{\text{$i$}})}
\newcommand{\dfivei}{(\mathsf{D5}_{\text{$i$}}^1)}
\newcommand{\dfiveii}{(\mathsf{D5}_{\text{$i$}}^2)}
\newcommand{\dfive}{(\mathsf{D5}_{\text{$i$}})}
\newcommand{\done}{(\mathsf{D1}_{\text{$i$}})}
\newcommand{\RB}{R_{\Box}}
\newcommand{\RI}{R_{[i]}}
\newcommand{\RO}{R_{\Oi}}

\newcommand{\R}{R}

\newcommand{\rel}{\mathcal{R}}

\newcommand{\ioa}{(\mathsf{IOA})}

\newcommand{\id}{(\mathsf{id})}

\newcommand{\Oir}{(\Oi)}
\newcommand{\Odiamrule}{(\ominus_{i})}

\newcommand{\disr}{(\vee)}
\newcommand{\conr}{(\wedge)}
\newcommand{\settr}{(\Box)}
\newcommand{\settdiar}{(\Diamond)}
\newcommand{\stitr}{(\agbox{})}
\newcommand{\stitdiar}{(\agdia)}

\EnableBpAbbreviations

\def\lastname{van Berkel, Lyon}

\begin{document}

\begin{frontmatter}
  \title{The Varieties of Ought-implies-Can and Deontic STIT Logic}
  \author{Kees van Berkel}\footnote{We would like to thank the reviewers of DEON2020 for their useful comments. This work is funded by the projects WWTF MA16-028, FWF I2982 and FWF W1255-N23. For questions and comments please contact \texttt{kees@logic.at}.}
 \address{Institute of Logic and Computation, \\
 Technische Universit\"at Wien, 1040 Wien, Austria 
}
\author{Tim Lyon
}

\address{Institute of Artificial Intelligence, \\
Technische Universit\"at Dresden, 
01069 Dresden,
Germany
}



\begin{abstract}
$\sti$ logic is a prominent framework for the analysis of multi-agent choice-making. In the available deontic extensions of $\sti$, the principle of Ought-implies-Can (OiC) fulfills a central role. However, in the philosophical literature a variety of alternative OiC interpretations have been proposed and discussed. This paper provides a modular framework for deontic $\sti$ that accounts for a multitude of OiC readings. In particular, we discuss, compare, and formalize ten such readings. We provide sound and complete sequent-style calculi for all of the various $\sti$ logics accommodating these OiC principles. We formally analyze the resulting logics and discuss how the different OiC principles are logically related. In particular, we propose an endorsement principle describing which OiC readings logically commit one to other OiC readings.
\end{abstract}
  \begin{keyword}
Deontic logic, $\sti$ logic, Ought implies can, Labelled sequent calculus
  \end{keyword}
 \end{frontmatter}

\section{Introduction}\label{sect:intro}

From its earliest days, the development of deontic logic has been accompanied by the observation that reasoning about duties 
 is essentially connected to \textit{praxeology}, that is, 
 the theory of agency (e.g.~\cite{Cas72,Mey88,Wri68}). A prominent modal framework developed for the analysis of multi-agent interaction and choice 
  is the logic of `Seeing To It That' \cite{BelPerXu01} (henceforth, $\sti$), and its potential for \textit{deontic 
  reasoning} was recognized from the outset~\cite{BelPer88}. Despite several philosophical investigations of the subject \cite{Bar93,
HorBel95}, concern for its formal specification lay dormant until the beginning of this century when a thorough investigation of deontic $\sti$ logic was finally conducted \cite{Hor01,Mur04}. Up to the present day, deontic $\sti$ continues to receive considerable attention, being applied to epistemic \cite{Bro11}, temporal \cite{BerLyo19b}, and juridical contexts \cite{LorSar15}. 

The traditional deontic $\sti$ setting \cite{Hor01} is rooted in a utilitarian approach to agential choice, 
 which enforces certain minimal properties on its agent-dependent obligation operators. In particular, it implies a version of the eminent \textit{Ought-implies-Can} principle (henceforth, OiC), a metaethical principle postulating that `what an agent ought to do, the agent can do'. OiC has a long history within moral philosophy and can be traced back to, for example, Aristotle \cite[VII-3]{Ari00}, 
 or 
  the ``Roman legal maxim \textit{impossibilium nulla obligatio est}’’ \cite{Vra07}. Still, it is often accredited to the renowned philosopher Immanuel Kant \cite[A548/B576]{Kan00}. Aside from debates on whether OiC should be adopted at all~\cite{Gra11,Sak00}, most discussions revolve around which \textit{version} of the principle should be endorsed. Notable 
   positions 
   have been
    taken up by Hintikka \cite{Hin70}, Lemmon \cite{Lem62}, Stocker \cite{Sto71}, Von Wright \cite{Wri63}, and, more recently, Vranas \cite{Vra07}. However, 
     most of these authors advocate readings 
 that are either weaker or stronger than the minimally implied OiC principle of traditional deontic 
   $\sti$. In order to formally investigate these different readings, it is necessary to modify and fine-tune the traditional framework.

The contributions of this work are as follows: First, we discuss, compare, and formalize ten OiC principles occurring in the  philosophical
 literature (Sect.~\ref{sect:principles}). To the best of our knowledge, such a taxonomy 
 of principles has not yet been undertaken (cf.~\cite{Vra07} for an extensive 
  bibliography).  
 The intrinsically agentive setting provided within the $\sti$ paradigm will enable us to conduct a fine-grained analysis of the various renditions of OiC. Still, the available utilitarian characterization of deontic $\sti$ makes it cumbersome to accommodate this multiplicity of principles. For that reason, the present endeavour will take a more modular approach to $\sti$, adopting relational semantics \cite{CiuLor17} through which the use of utilities may be omitted \cite{BerLyo19b}  (Sect.~\ref{sect:logics}).
  
Second, we provide sound and complete sequent-style calculi for all classes 
  of deontic $\sti$ logics accommodating 
 the various kinds of formalized OiC principles 
   (Sect.~\ref{sect:calculi}). In particular, we adopt 
 labelled sequent calculi which 
     explicitly incorporate useful
    semantic information into their rules 
     \cite{Neg14,Vig00}. 
 A general benefit of using sequent-style calculi \cite{Tak13}, in contrast to 
    axiomatic systems, 
    is that the former are 
    suitable for 
    applications (e.g. proof-search and counter-model extraction)~\cite{LyoBer19}. Although this work is not the first to address $\sti$ through alternative proof-systems~\cite{ArkBriBel05,LyoBer19,Wan06}, it is the first to address 
     both the traditional deontic setting ~\cite{Hor01} and a large class of novel deontic $\sti$ logics. 

Last, we will use the resulting deontic $\sti$ calculi to obtain a formal taxonomy of the OiC readings discussed.  The benefit of employing proof theory 
 is twofold: First, we classify the ten OiC principles according to the respective strength of the underlying $\sti$ logics in which they are embedded (Sect.~\ref{sect:analysis}). The calculi can be used to determine which logics subsume each other, giving rise to what we call an \textit{endorsement principle}; it demonstrates which endorsement of which OiC readings logically commits one to endorsing other OiC readings (from the vantage of $\sti$). Second, the calculi can be applied to show the mutual independence of certain OiC readings through the construction of counter-models from failed proof-search. 
  This work will lay 
   the 
 foundations for an extensive 
  investigation of 
  OiC within the realm of agential choice, 
   and 
  future research directions will be addressed in Sect.~\ref{sect:conclusion}.

\section{A Variety of Ought-implies-Can Principles}\label{sect:principles}

The fields of moral philosophy and deontic logic have given rise to a variety of metaethical principles, such as 
``no vacuous obligations’’~\cite{Wri51}, ``
deontic contingency’’~\cite{AndMoo57}, ``
deontic consistency’’~\cite{HilMcN13}, and the principle of ``alternate possibilities''~\cite{Cop17}. 
  One of the most prevalent is perhaps the principle of \textit{``Ought-implies-Can''}. In fact, we will see that each of the former metaethical canons is significant relative to different interpretations of OiC. 
 In this 
     section, we 
      introduce and discuss ten such interpretations of OiC and indicate their relation to the aforementioned metaethical principles. Many philosophers have addressed OiC, and while earlier thinkers (e.g. Aristotle and Kant) only discussed it implicitly, it was made an explicit subject of investigation in the past century. We will focus solely on frequently recurring readings from authors that are---in our opinion---central to the debate. Despite the apparent relationships between some of the considered OiC readings, a precise taxonomy of their logical interdependencies can only be achieved through a formal investigation of their corresponding 
       logics. We will provide such a taxonomy 
       in Sect.~\ref{sect:analysis}.

One of the allures 
of OiC is that it releases agents from alleged duties which are impossible, strenuous, or over-demanding~\cite{Dah74,McC89}. Namely, in its basic 
  formulation---
`what an agent ought to do, the agent can do'---the principle ensures 
 that an agent can only be normatively bound by what it can do, i.e., 
  `what the agent can't do, the agent is not obliged to do'. 
Most disagreement concerning OiC can be understood in terms of the degree to which an agent must be burdened or relieved. In essence, such discussions revolve around the appropriate interpretation of the terms `ought’, `implies’, and predominantly, `can’. 
 In what follows, we take `ought’ to represent agent-dependent obligations and take `implies’ to stand for material implication	 
 (for a discussion see~\cite{AckKuh15,Vra07}). With respect to the term `can', we roughly identify four readings: 
  (i) 
   possibility, (ii) 
  ability, (iii) violability, and (iv) control.  These four concepts give rise to 
  eight OiC principles. 
 We close the section with a discussion of two additional OiC principles which 
  adopt a normative reading of the term `can'.

Throughout our discussion we introduce logical formalizations of the proposed OiC readings that will be made formally precise in subsequent sections. Therefore, it will be useful at this stage to introduce some notation employed in our formal language: we let $\phi$ stand for an arbitrary $\sti$ formula. The connectives $\lnot,\land,$ and $\rightarrow$ are respectively interpreted as `not', `and', and `implies'. Let $[i]$ be the basic $\sti$ operator such that, in the spirit of \cite{BelPerXu01}, we interpret $[i]\phi$ both as `agent $i$ sees to it that $\phi$' and `agent $i$ chooses to ensure $\phi$'. We use the operator $\Box$ to refer to what is `settled true', such that $\Box\phi$ can be read as `currently, $\phi$ is settled true'. 
 The main use of $\Box$ is to discern between those state-of-affairs that can become true---i.e. actual---through an agent's choice 
 and those state-of-affairs that are true---i.e. actual---independent of the agent's choice. For this reason we will also interchangeably employ the term `actual' in referring to $\Box$ 
 (for an extensive discussion see 
  \cite{BelPerXu01}). We take $\Diam$ to be the dual of $\Box$, denoting that some state of affairs is actualizable, i.e., can become actual. Last, we read 
    $\Oi$ as `it ought to be the case for agent $i$ that'.\footnote{We stress that OiC is essentially agentive, but not necessarily referring to choice in particular. For this reason,
     we distinguish `it ought to be the case for agent $i$ that' from the stronger `agent $i$ ought to see to it that'. The latter reading corresponds to the notion of `\textit{dominance ought}' advocated by Horty \cite{Hor01}. Initially, the distinction will be observed for OiC. In Sect.~\ref{sect:analysis} we show how the logics can be expanded to obtain the stronger reading 
     proposed in \cite{Hor01}.} 

\begin{itemize}

\item[$\mathsf{1.}$] 
 \textit{Ought implies Logical Possibility}: $\Oi \phi \rightarrow \lnot\! \Oi\!\lnot \phi$ 
  ($\mathsf{OiLP}$). 
What is obligatory for an agent, 
 should 
  be 
   consistent from an ideal point of view. 

\end{itemize}
 The first principle, which is one of the weakest interpretations of OiC, requires the content of an agent's obligations to be non-contradictory. 
 Within the philosophical literature this interpretation has been referred to 
 as ``ought implies logical possibility’’~\cite{Vra07} and the principle has been generally 
 equated with 
the metaethical principle of ``deontic consistency’’ (e.g.~\cite{Eck82,Lem62}).\footnote{In \cite{Wri81}, Von Wright baptizes $\oilp$ `Bentham's Law' and points out that the canon was already adopted by Mally in what is known as the first attempt to construct a deontic logic.} 
As a 
  minimal constraint on deontic reasoning, the principle is a cornerstone of (standard) Deontic Logic \cite{AndMoo57,HilMcN13,Wri51}, though it has been repudiated by some~\cite{Lem62}.
\begin{itemize}
\item[$\mathsf{2.}$] 
\textit{Ought implies Actually Possible}: 
 $\Oi \phi \rightarrow \Diam \phi$ ($\oipp$). What is obligatory for an agent, 
 should be actualizable. 
\end{itemize}
The above principle is slightly stronger than the previous one: it rules out those conceptual consistencies that might not be realizable at the current moment.
\footnote{In \cite{HilMcN13}, OiC is named `Kant's law' and $\oilp$ and $\oipp$ are classified as weak versions of the law. 
  However, it is open to debate which reading of OiC 
  Kant 
  would 
  admit 
   (e.g.~\cite{Koh15,Tim13}).} That is, the principle requires that norm systems can only demand what can presently become \textit{actual}. For example, 
    `although it is logically possible 
 to open the window, it is currently not actualizable, 
  since I am tied to the chair'.

However, both $\oilp$ and $\oipp$ are arguably too weak, and do not involve the concerned agent 
 whilst interpreting `can'. For instance, although `a moon eclipse' is both logically and actually possible, it should not be considered as something an agent ought to bring about. For this reason, most renditions of OiC involve the agent explicitly:  
\begin{itemize}
\item[$\mathsf{3.}$] 
 \textit{Ought implies Ability}: $\Oi \phi \rightarrow \Diam [i] \phi$ ($\mathsf{OiA}$). What is obligatory for an agent, the agent must have the ability to see to, i.e. the choice to realize.  
\end{itemize}
The above reading 
enforces an explicitly agentive precondition on obligations: it requires ability as the agent's capacity to guarantee the realization of that which is prescribed.\footnote{Similarly, Von Wright distinguishes between human and physical possibility (cf. $\oia$ and $\oipp$, resp.), both implying logical possibility (cf. $\oilp$) as a necessary condition \cite[p.50] {Wri68}.} The concept of ability has many formulations (cf. \cite{Bro11,Bro88,Gol70,Wri63}); for example, it may denote general ability, present ability, potential ability, learnability, know-how, and even technical skill (also, see \cite{McC89,Sto71,Vra07} on the corresponding notion of `inability'). In what follows, we take `ability' to mean a \textit{moment-dependent} possibility for an agent to guarantee that which is commanded through an available \textit{choice}.

Observe that $\oia$ 
 is the principle implied by the traditional, utilitarian based 
  deontic 
   $\sti$ logic \cite{Hor01,Mur04}. However, this OiC reading 
    does not 
    completely capture the notion of `ability’ as 
    generally encountered in the philosophical literature. That is, $\oia$ merely requires that what is prescribed for the agent can be guaranteed through one of the agent's choices, but does not exclude what is called vacuously satisfied obligations. Agents could still have obligations (and corresponding `abilities') to bring about inevitable states-of-affairs, such as the obligation to realize a tautology (cf.~\cite{BerLyo19b}). Philosophical notions of ability regularly ban such consequences by strengthening the concept of ability with either (i) the \textit{possibility} that the obligation may be \textit{violated}, (ii) the agent’s \textit{ability to violate} what is demanded (i.e. an agent may refrain from fulfilling a duty), (iii) the right \textit{opportunity} for the agent to exercise its ability, or (iv) the agent’s \textit{control} over the situation (i.e. the agent's power to decide over the fate of what is prescribed). All of the above conceptions of agency are contingent 
    in nature, that is, they range over state-of-affairs which are capable of being otherwise \cite{HorBel95}. Each notion will be addressed in turn. 
       
\begin{itemize}
\item[$\mathsf{4.}$] 
\textit{Ought implies Violability}: $\Oi\phi\rightarrow \Diam \lnot \phi$  ($\mathsf{OiV}$). 
An agent’s obligation 
must be violable, that is, the opposite of what is prescribed must be possible. 
\end{itemize}

The above principle corresponds to 
 the metaethical principle of 
``no vacuous obligations’’, which ensures that neither tautologies are 
obligatory nor 
contradictions are prohibited~\cite{AndMoo57,HilMcN13,Wri63}. However, in $\oiv$ a violation might still arise through causes external to the agent concerned; 
  e.g. `the prescribed opening of a window, might be closed through a strong gust of wind'.\footnote{Already in \cite
{Wri51} Von Wright posed the `no vacuous obligations' principle as a central principle of deontic logic. 
 There, he referred to it as ``the principle 
 of contingency'', however, 
 contingency requires that an obligation is not only 
 violable, but also satisfiable (cf. $\oio$).}  
 The following principle strengthens this notion by making violability an agentive matter: 
  \begin{itemize}
\item[$\mathsf{5.}$] 
 \textit{Ought implies Refrainability}
 : $\Oi \phi \rightarrow \Diam [i] \lnot [i] \phi$ ($\mathsf{OiR}$). An agent’s obligation 
  must be deliberately violable by the agent, that is, the agent must be able to refrain from satisfying its obligation. 
\end{itemize}
In the jargon of $\sti$, we say that \textit{refraining} from fulfilling one's duty requires ``an embedding of a non-acting within an acting''~\cite[Ch.2]{BelPerXu01}. That is, it requires the possibility to `see to it that one does not see to it that'. However, the two violation principles above are insubstantial when that which is obliged is not possible in the first place.\footnote{We conjecture that this is why Vranas states that $\oir$ is strictly not an OiC principle \cite{Vra07}.} For instance, it is not difficult for an agent to violate an obligation to `create a moon eclipse' (it could not be done otherwise).\footnote{Observe that violability relates strongly to the metaethical principle of ``alternate possibility'', stating that an agent is morally culpable if it could have done otherwise (e.g. \cite{Cop17,Yaf99}).} To avoid such cases, we often find that the ideas from $\mathsf{1{-}5}$ are combined:
\begin{itemize}
\item[$\mathsf{6.}$] 
 \textit{Ought implies Opportunity} ($\mathsf{OiO}$): $\Oi \phi \rightarrow (\Diam \phi \land \Diam \lnot \phi)$. What is obligatory for an agent, must be a contingent state-of-affairs. 
\end{itemize}
The above uses 
 the terms `opportunity' and `contingency' intentionally in an interchangeable manner. 
  Like previous terms, these terms 
 know a variety of readings in the literature 
  (cf. \cite{Cop17,Dah74,Vra07,Wri51}). Nevertheless, what these readings share in relation to OiC is that they refer to the propriety of the circumstances in which the agent is required to fulfill its duty. Minimally, opportunity and contingency both require that a state-of-affairs within the scope of an active norm must be presently manipulable; i.e. the state-of-affairs can still become true or false.\footnote{A more fine-grained distinction can be made: in temporal settings a state-of-affairs can be occasionally true and false (i.e. contingent), despite the fact that at the present moment it is settled true and thus beyond the scope of the agent's influence (i.e. there is no opportunity). In the current atemporal $\sti$ setting, this will not be explored.} This interpretation of $\oio$ is related
 to what Von Wright has in mind when he talks about the opportunity 
 to interfere with the course of nature \cite{Wri63
}, 
and to 
 Anderson and Moore's claim that sanctions (i.e. violations) must be both provokable and 
 avoidable, viz. contingent~\cite{AndMoo57}.




Taking the above a step further,  
 agency can be more precisely described as the agent's \textit{ability} together with the right \textit{opportunity}. 
 Following Vranas \cite{Vra07}, the latter component 
 specifies ``the situation hosting the event in which the agent has to exercise her ability''. 
 The following principle merges these ideas:  

\begin{itemize}
\item[$\mathsf{7.}$] 
 \textit{Ought implies Ability and Opportunity}: $\Oi \phi \rightarrow (\Diam [i] \phi \land \Diam \phi \land \Diam \lnot \phi)$ ($\mathsf{OiA+O}$). What is obligatory for an agent, must be a contingent state-of-affairs whose truth the agent has the ability to secure.\footnote{In basic atemporal $\sti$ the occurrence of $\Diam\phi$ in the consequent of $\oiao$ can be omitted since it is strictly implied by $\Diam[i]\phi$; that is, if $\phi$ can be the result of an agent's choice, then by definition it can be actualized. For the sake of completion we leave $\Diam\phi$ present in $\oiao$.} 
\end{itemize}
 The above 
  is the first completely agentive OiC principle, 
 making that which is obligatory fall, in all its facets, within the reach of the agent. Such a reading of OiC can be said to be truly deliberative and both Vranas \cite{Vra07} and Von Wright \cite{Wri63} appear to endorse a principle similar to $\oiao$. However, there is an even stronger reading which restricts norms to those state-of-affairs within the agent's complete \textit{control}:
\begin{itemize}
\item[$\mathsf{8.}$] 
 \textit{Ought implies Control}: $\Oi\phi \rightarrow (\Diam [i] \phi \land \Diam [i] \lnot \phi)$ ($\mathsf{OiCtrl}$). What is obligatory for an agent, 
  the agent must have the ability to see to and the agent must have the ability to see to it that the obligation is violated. 
\end{itemize}
This reading, arguably advocated by Stocker \cite{Sto71}, requires that an agent can act \textit{freely}: ``it has often been maintained that we act freely in doing or not doing an act only if we both can do it and are able not to do it'' \cite
{Sto71}.\footnote{In the above quote, `able not to do [$\phi$]' can also be interpreted as $\Diam [i] \lnot [i] \lnot \phi$, instead of $\Diam [i]\lnot\phi$. The resulting principle would then 
 equate 
  with the weaker $\oiao$ in basic $\sti$.} This last, perhaps too strong, instance of OiC implies that an agent is only subject to norms whose subject matter is within the \textit{power} of the agent.

\begin{figure}
\def\arraystretch{1.0}{
{\small
\begin{tabular}{l l l l} 
 $\!\!$Label  & Ought implies... & Formalized &  References \\
\hline
 $\!\!\mathsf{OiLP} $  & 
 Logical Possibility $\!\!\!$ & $\Oi \phi \rightarrow \lnot \!\Oi\!\lnot \phi$ &   \cite{AndMoo57}, \cite{Eck82}, \cite{Wri51}, \cite{Wri81} \\
$\!\!\oipp $ & 
Actually Possible & $\Oi \phi \rightarrow \Diam \phi$ &  \cite{Eck82}, \cite[Ch.3]{Hor01} \\
$\!\!\mathsf{OiA} $& 
Ability & $\Oi \phi \rightarrow \Diam [i] \phi$ &  \cite[Ch.4]{Hor01}, \cite[Ch.7]{Wri63} \\
$\!\!\mathsf{OiV} $& 
Violability & $\Oi \phi \rightarrow \Diam \lnot \phi$ &  \cite{AndMoo57}, \cite{Dah74}, \cite{Gol70}, \cite[Ch.8]{Wri63}$\!\!\!\!$ \\
$\!\!\mathsf{OiR} $ & 
Refrainability & $\Oi \phi \rightarrow \Diam [i]\lnot [i] \phi$ & \cite{Gol70} \\
$\!\!\mathsf{OiO} $ & 
Opportunity & $\Oi \phi \rightarrow (\Diam \phi \land \Diam \lnot \phi)$ & \cite{AndMoo57}, \cite{Cop17}, \cite{Dah74}, \cite{Wri51}, \cite{Wri68} \\
$\!\!\mathsf{OiA{+}O}\!\! $& 
Ability and Opp. 
& $\Oi \phi \rightarrow (\Diam [i] \phi \land \Diam\phi \land \Diam \lnot \phi)\!\!\!$ & \cite{AckKuh15}, \cite{Koh15}, \cite{Vra07}, \cite{Wri63} \\
 $\!\!\mathsf{OiCtrl} $& 
 Control & $\Oi \phi \rightarrow (\Diam [i] \phi \land \Diam [i] \lnot \phi)\!\!$ &\cite{Dah74}, \cite{Sto71}, \cite{McC89} \\
 $\!\!\oinc$ & Normatively Can & $\Oi\phi \rightarrow \Oi\Diam\phi$ &  \cite{AckKuh15}, \cite{Hin70}\\
 $\!\!\oina$ & Normatively Able & $\Oi\phi \rightarrow \Oi \Diam [i] \phi$ & \cite{AckKuh15}, \cite{Hin70} \\
\end{tabular}

\caption{List of the ten OiC principles together with their 
treatment in the literature.}\label{fig:oic}
}
}
\end{figure}


In all its 
  readings, OiC has still been regarded as too strong. 
   For example, Lemmon 
    challenged the legitimacy of $\oilp$ in light of the existence of moral dilemmas \cite{Lem62}. Other philosophers, like Hintikka \cite{Hin70}, adopted more modest standpoints toward OiC, suggesting 
weaker, normative versions of the principle. In light of the latter, it has been argued that OiC is 
 dispositional, merely capturing a 
 normative attitude towards OiC \cite{AckKuh15}. 
  Two approaches present themselves: (i) `it \textit{ought to be} the case that what morality prescribes is possible' 
  or (ii) `
   it \textit{ought to be possible} for an agent to fulfill its obligations'.\footnote{Hintikka advocates the first possibility; i.e. ``$\mathcal{O}(\mathcal{O}\phi\rightarrow \Diam\phi)$'' \cite{Hin70}. However, one could argue that 
  the first occurrence of $\mathcal{O}$ is actually agent-\textit{in}dependent, 
  and the latter agent-dependent. 
  } The former does not correspond to an OiC principle, but only expresses that OiC \textit{should} hold as a metaethical principle (we return to this in Sect.~\ref{sect:analysis}). The latter approach does provide 
  OiC principles---we consider two possible readings: 
 
 \begin{itemize}
\item[$\mathsf{9.}$] 
 \textit{Ought implies Normatively Can}: $\Oi\phi \rightarrow \Oi \Diam \phi$ ($\mathsf{OiNC}$). What 
 is obligatory for an agent, 
 ought to be actually possible (for the agent). 
\end{itemize}

 \begin{itemize}
\item[$\mathsf{10.}$] 
 \textit{Ought implies Normatively Able}: $\Oi\phi \rightarrow \Oi \Diam [i] \phi$ ($\mathsf{OiNA}$). What is obligatory for an agent, ought to be actualizable through the agent's behavior.
\end{itemize}
Hence, both $\oinc$ and $\oina$ require that, `if $\phi$ ought to be the case for agent $i$, it ought to be the case for agent $i$ that $\phi$ is actually possible (as a result of the agent's choice-making)'. In Fig.~\ref{fig:oic}, the ten principles are collected and associated with references to the various authors that treat such principles.
 
It is not our aim to decide which OiC principle should be adopted, as good cases have been made for each. Instead, our present aim is as follows: first, we appropriate the framework of $\sti$ such that all ten principles can be explicitly formulated (Sect.~\ref{sect:calculi}). Second, we use the resulting logics to formally determine the logical relations between the ten principles (Sect.~\ref{sect:analysis}). The final result will be a logical hierarchy of OiC principles, identifying which principles subsume others and which are mutually independent within the setting of $\sti$.

\section{Deontic STIT Logic for Ought-implies-Can}\label{sect:logics}

In this section, 
 we will introduce a general deontic $\sti$ language and semantics whose modularity enables us to define a collection of deontic $\sti$ logics 
  that will accommodate the variety of OiC principles discussed previously. It will suffice to consider a multi-agent modal 
   language containing the basic $\sti$ operator (i.e. the Chellas $\sti$) and the `settled true' operator, extended with agent-dependent deontic operators. 

\begin{definition}\textbf{(The Language $\lang_{n}$)}\label{def:language} Let $Ag = \{1,2,...,n\}$ be a finite set of agent labels and let $Atm =\{p_1,p_2,p_3...\}$ be a denumerable set of propositional atoms. The language $\lang_n$ is defined via the following BNF grammar:
$$\phi ::= p \ | \ \neg p \ | \ \phi \lor \phi \ | \ \phi \land \phi \ | \ \Box \phi \ | \ \Diam \phi \ | \ [i] \phi \ | \ \agdia \phi \ | \ \Oi\phi \ | \ \ODi \phi$$
where $i\in Ag$ and $p \in Atm$.

\end{definition}

We note that the formulae of $\lang_n$ are defined in negation normal form. In line with~\cite{BerLyo19a,LyoBer19}, we opt for this notation because it will 
 enhance the readability of the technical part of this paper. Namely, negation normal form will reduce the number of logical rules needed in our sequent-style calculi (see Sect.~\ref{sect:calculi}), and will simplify the structure of sequents used in derivations (see Sect.~\ref{sect:analysis}). Briefly, the negation of a formula $\phi \in \lang_{n}$, denoted by $\neg \phi$, can be obtained by replacing each positive propositional atom $p$ with its negation $\neg p$ (and vice versa), each $\land$ with $\lor$ (and vice versa), and each modal operator with its corresponding dual (and vice versa).
   
The logical connectives $\lor$ and $\land$ stand for `or' and `and', respectively. Other connectives and abbreviations are defined accordingly: $\phi \rightarrow \psi \textit{ iff } \neg \phi \lor \psi$, $\phi \equiv \psi \textit{ iff } (\phi \rightarrow \psi) \land (\psi\rightarrow \phi)$, $\top \textit{ iff } p \lor \neg p$, and 
    $\bot \textit{ iff } p \land  \neg p$. The modal operators $\Box$, $[i]$, and $\Oi$ express, respectively, `currently, it is settled true that', 
  `agent $i$ sees to it that', and `it ought to be the case for agent $i$ that'. We take $\Diam$, $\langle i \rangle$, and $\ODi$ as their respective duals. Last, we interpret $\ODi$ as `it is not obligatory for agent $i$ that not’ (a similar interpretation is applied to $\Diam$ and $\langle i\rangle$). (NB. negation normal form requires us to take diamond-modalities as primitive.)\footnote{In line with \cite{Mur04}, we take the concatenation $\Oi[i]$ to stand for `agent $i$ ought to see to it that',  thus expressing 
   the stronger agentive reading of obligation 
    defended by \cite{Hor01} (also, see footnote 2). However, whether 
    $\Oi[i]$ will capture the intended logical behavior of this 
    reading
     will depend on the adopted class of $\sti$-frames. We will discuss this in 
      Sect.~
\ref{sect:analysis}.} 

\subsection{Minimal Deontic $\sti$ Frames}

Since we are dealing with an atemporal $\sti$ language, we can forgo the traditional semantics of branching time frames with agential choice functions ~\cite{BelPerXu01}. Instead, we adopt a more modular approach using relational semantics \cite{CiuLor17}. As shown in \cite{HerSch08}, it suffices to semantically characterize basic $\sti$ using frames that only model moments partitioned into equivalence classes, with the latter representing the choices available to the agents at the respective moment. As our starting point, we propose the following minimal deontic $\sti$ models:


\begin{definition}\textbf{(Frames and Models for $\baselogic_{n}$})\label{def:frames-models} A \emph{$\ds{n}$-frame} is defined to be a tuple $F = \langle W, \R_{\Box}, \{\R_{[i]} \ | \ i \in Ag\}, \{ \opt_{\Oi} \ | \ i \in Ag\} \rangle$ with $n=|Ag|$. Let $\R_{\alpha} \subseteq W \times W$ and $\R_{\alpha}(w) := \{v\in W \ | \ (w,v) \in R_{\alpha}\} $ for $\alpha \in \{\Box \} \cup \{[i], \Oi \ | \ i\in Ag\}$. Let $W$ be a non-empty set of worlds $w,v,u...$ where:

\vspace{5pt}
\renewcommand{\arraystretch}{1.1}
\noindent \begin{tabular}{p{1em} p{30pt} p{295pt}}
 & {\rm \textbf{C1}} & $\R_{\Box}$ is an equivalence relation. \\
 & {\rm \textbf{C2}} & For all $i\in Ag$, $\R_{[i]} \subseteq \RB
 $ is an equivalence relation. \\
 & {\rm \textbf{C3}} & For all $w\in W$ and 
  all $u_{1},...,u_{n} \in R_{\Box}(w)$, $\bigcap_{i \in Ag} \R_{[i]}(u_{i}) \neq \emptyset$. \\
 & {\rm \textbf{D1}} & For all $w,v, u \in W$, if $R_{\Box}wv$ and $\opt_{\Oi}wu$, then $\opt_{\Oi}vu$. \\
\end{tabular}
\vspace{5pt}

\noindent A \emph{$\baselogic_{n}$-model} is a tuple $M = (F,V)$ where $F$ is a $\ds{n}$-frame and $V$ is a valuation function mapping  propositional
 atoms to subsets of $W$, i.e. $V{:}\ Atm \mapsto \mathcal{P}(W)$.
\end{definition}

In Def.~\ref{def:frames-models}, property \textbf{C1} stipulates that $\ds{n}$-frames are partitioned into $R_{\Box}$-equivalence classes, 
which we will refer to as \textit{moments}. Intuitively, a moment is a collection of worlds that can become actual. For every agent in the language, \textbf{C2} partitions moments into equivalence classes, representing the agent's \textit{choices} 
  at such moments. The elements of a choice represent those worlds that can become actual through exercising that choice. \textbf{C3} captures the pivotal $\sti$ principle called `independence of agents', ensuring that all agents can jointly perform their available choices; i.e. simultaneous choices are consistent (cf.~\cite{BelPerXu01}). \textbf{D1} enforces that ideal worlds do not vary from different perspectives within a single moment; i.e. an ideal world is ideal from the perspective of the entire moment. In addition, \textbf{D1} states that obligations are moment-dependent; i.e. obligations 
   might vary from moment to moment. We emphasize that the class of $\ds{n}$-frames does not require that worlds ideal at a certain moment lie within that very moment. Hence, 
    what is ideal might not be realizable by any of the agents' (combined) choices, and so, might be beyond the grasp of agency.\footnote{Traditional deontic $\sti$ confines ideal worlds to moments since it restricts the evaluation of utilities to moments \cite{Hor01}. Consequently, $(\Oi \phi \rightarrow \ODi \phi) \equiv (\Oi \phi \rightarrow \Diam \phi)$ is valid for the traditional approach, and thus, logical and actual possibility coincide. 
    Our alternative semantics enables us to differentiate between $\oilp$, $\oipp$ and a variety of other OiC principles.}
 
\begin{definition}\textbf{(Semantics for $\lang_{n}$)}\label{def:semantics} Let $M$ be a $\baselogic_{n}$-model and let $w\in W$ of $M$. The \emph{satisfaction} of a formula $\phi\in \lang_{n}$ in $M$ at $w$ is defined accordingly:
 
\vspace{-6pt}
\setlength{\columnsep}{-0.8em}
\addtolength{\linewidth}{-0.5em}
\begin{multicols}{2}
\begin{itemize}
\item[1.] $w \Vdash p$ \textit{iff} $w \in V(p)$

\item[2.] $w \Vdash \neg p$ \textit{iff} $w \not\in V(p)$

\item[3.] $w \Vdash \phi \wedge \psi$ \textit{iff} $w \Vdash \phi$ and $w \Vdash \psi$

\item[4.] $w \Vdash \phi \vee \psi$ \textit{iff} $w \Vdash \phi$ or $w \Vdash \psi$

\item[5.] $w \Vdash \Box \phi$ \textit{iff} $\forall u \in \RB(w)$, $u \Vdash \phi$

\end{itemize}
\columnbreak
\begin{itemize}
\item[6.] $w \Vdash \Diamond \phi$ \textit{iff} $\exists u \in \RB(w)$, $u \Vdash \phi$
\item[7.] $w \Vdash [i] \phi$ \textit{iff} $\forall u \in \R_{[i]}(w)$, $u \Vdash \phi$

\item[8.] $w \Vdash \lb i \rb \phi$ \textit{iff} $\exists u \in \R_{[i]}(w)$, $u \Vdash \phi$

\item[9.] $w \Vdash \Oi \phi$ \textit{iff} $\forall u \in \opt_{\Oi}(w)$, $u \Vdash \phi$

\item[10.] $w \Vdash \ODi \phi$ \textit{iff} $\exists u \in \opt_{\Oi}(w)$, $u \Vdash \phi$

\end{itemize}
\end{multicols}
\vspace{-6pt}
\noindent 
Global truth, validity, and semantic entailment are defined as usual (see~\cite{BlaRijVen01}). We define the \textit{logic} $\ds{n}$ as the set of $\lang_{n}$ formulae valid on all $\ds{n}$-frames.
\end{definition}

\subsection{Expanded Deontic $\sti$ Frames}

In order to obtain an assortment of deontic $\sti$ characterizations accommodating the different OiC principles, we proceed in two ways: first, we define more fine-grained deontic $\sti$ operators capturing deliberative aspects of obligation, and second, we introduce a class of frame properties that change the behavior of the $\Oi$ operator when imposed on $\ds{n}$-frames.




Observe that in basic $\sti$ the choice-operator $[i]$ is a normal modal operator, which implies that $[i] \top$ is one of its validities. 
 In contrast, the more refined \textit{deliberative} $\sti$ operator---i.e. $[i]^d\phi \textit{ iff } [i]\phi \land \Diam \lnot \phi$---is non-normal and, for this reason, has been taken as defined \cite{HorBel95} (with the exception of \cite{Xu98}). (NB. For deliberative $\sti$, choices thus range over contingent state of affairs.) 
 For the same reason 
 that $\Oi\top$ is a validity of basic $\ds{n}$, we will similarly introduce two defined modalities for \textit{deliberative obligations}. Namely, we take  
$$\Oi^d \phi \textit{ iff } \Oi \phi \land \Diam \lnot \phi$$
to define a \textit{weak deliberative} obligation, expressing that an agent’s obligations can be violated (cf. \cite{Mur04,BerLyo19b}). Furthermore, we introduce 
$$\Oi^c \phi \textit{ iff } \Oi\phi \land \Diam [i]\lnot \phi$$
 as defining a \textit{strong deliberative} obligation, asserting that the obligation 
  is violable through the agent's behavior. These operators will be necessary to 
  formally capture the 
  deliberative versions of OiC in the present $\sti$ setting. 
 
Additionally, we provide
 four properties that may be imposed on $\ds{n}$-frames to change the logical behavior 
 of the $\Oi$ operator:

\vspace{7pt}
\renewcommand{\arraystretch}{1.1}
\noindent \begin{tabular}{p{1em} p{30pt} p{295pt}}
 & {\rm \textbf{D2}} & For all $w\in W$ there exists $v\in W$ s.t. $\opt_{\Oi} wv$. \\
  & {\rm \textbf{D3}} & For all $w, v\in W$, if $\RO wv$ then $\RB wv$. \\
 & {\rm \textbf{D4}} & For all $w, v, u \in W$, if $\opt_{\Oi}wv$ and $R_{\agbox}vu$, then $\opt_{\Oi}wu$.\\
 & {\rm \textbf{D5}} & For all $w \in W$, there exists a $v\in W$, such that $\opt_{\Oi} wv$ and \\
 & & for all $u \in W$, if $R_{\agbox}vu$, then $\opt_{\Oi}wu$. \\
 \end{tabular}
\vspace{3pt}

Property \textbf{D2} requires that obligations are consistent; i.e. at every moment and for every agent, there exists an ideal situation for which the agent should strive (cf. seriality in Standard Deontic Logic \cite{HilMcN13}). \textbf{D3} 
enforces that ideal worlds are confined to moments (implying 
 that 
every ideal world is realizable at its corresponding moment; cf. footnote 14). Subsequently, \textbf{D4} expresses that agent-dependent obligations are about 
 choices, thus enforcing that every ideal world coincides with an ideal choice (cf. footnote 13): i.e. 
 when `it ought to be the case for agent $i$ that' then `agent $i$ ought to see to it that' (the other direction follows from \textbf{C2} Def.~\ref{def:frames-models}). 
Lastly, \textbf{D5} states that for every agent $i$ there always exists at least one ideal 
 choice (depending on whether \textbf{D3} is adopted, this ideal choice will be guaranteed accessible by an agent or not). It must be noted that, as shown in \cite{BerLyo19b}, all four properties hold for the traditional approach to deontic $\sti$ \cite{Mur04}. We return to this in Sect.~\ref{sect:analysis}.

We define the entire class of $\sti$ logics considered in this paper as follows: 

\begin{definition}\textbf{(The logics $\dsx{n}{X}$)}\label{def:frames-models-all} Let $\mathcal{D} =\{$\textbf{D2}, \textbf{D3}, \textbf{D4}, \textbf{D5}$\}$, $n = |Ag|$ and $\mathsf{X} \subseteq \mathcal{D}$. 
 A $\ds{n}\mathsf{X}$-\textit{frame} is 
  a tuple $F = \langle W, \R_{\Box}, \{\R_{[i]} \ | \ i \in Ag\}, \{ \opt_{\Oi} \ | \ i \in Ag\} \rangle$ such that $F$ satisfies all properties of a $\ds{n}$-frame (Def.~\ref{def:frames-models}) expanded with the frame properties $\mathsf{X}$. A $\ds{n}\mathsf{X}$-\textit{model} is a tuple $(F,V)$ where $F$ is a $\ds{n}\mathsf{X}$-\textit{frame} and $V$ is a valuation function as in Def.~\ref{def:frames-models}. We define the \textit{logic} $\ds{n}\mathsf{X}$ to be the set of formulae from $\lang_{n}$ valid on all $\ds{n}\mathsf{X}$-frames. 
\end{definition}

In the following section we provide sound and complete sequent-style calculi for all logics $\ds{n}\mathsf{X}$ obtainable through Def.~\ref{def:frames-models-all}. Together with the 
defined deliberative obligation modalities 
 $\Oi^d$ and $\Oi^c$, the resulting class of calculi will suffice to capture all the deontic $\sti$ logics accommodating 
 the different OiC principles of Sect.~\ref{sect:principles}. This will be demonstrated in Sect.~\ref{sect:analysis}.



\section{Deontic STIT Calculi for Ought-implies-Can}\label{sect:calculi}


This section comprises the technical part of the paper: we introduce sound and complete sequent-style calculi $\gds{n}{X}$ for the multi-agent logics $\dsx{n}{X}$ defined in Def.~\ref{def:frames-models-all}. In what follows, we build on a simplified version of the refined labelled calculi for basic $\sti$ proposed in~\cite{LyoBer19}. 
In the present work, we modify this framework to include the deontic setting. 
 Due to space constraints, we refer to \cite{LyoBer19} for an extensive discussion on refined labelled calculi. For an introduction to sequent-style calculi in general see~\cite{Tak13}, and for labelled calculi in particular, see~\cite{Neg14,Vig00}. 
%
Labelled calculi offer a procedural, computational approach to making explicit semantic arguments. This approach not only allows for a precise understanding of the logical relationships between the different OiC readings and corresponding logics, but can additionally be harnessed to construct counter-models 
  confirming the independence of certain OiC principles. We will demonstrate this in Sect.~\ref{sect:analysis}. 
 
\begin{definition}Let $Lab := \{ x, y, z, ... \}$ be a denumerable set of \textit{labels}. The language of our calculi consists of \emph{sequents} $\Lambda$, which are syntactic objects of the form $\rel \vdash \Gamma$. 
 $\rel$ and $\Gamma$ are defined via the following BNF grammars:
$$\rel ::= \varepsilon \ | \ \R_{\Box}xy \ | \ \R_{[i]}xy \ | \ \opt_{\Oi}xy \ | \ \rel, \rel \qquad\quad \Gamma ::= \varepsilon \ | \ x:\phi \ | \ \Gamma, \Gamma$$
with $i \in Ag$, $\phi \in \lang_{n}$, and $x,y \in Lab$. 
\end{definition}

We refer to $\rel$ as the 
  \emph{antecedent} of $\Lambda$ 
 and to $\Gamma$ as 
 the \emph{consequent} of $\Lambda$. We use $\mathcal{R}$, 
 $\mathcal{R}'$, 
  $\ldots$ to denote strings generated by the top left grammar and refer to formulae (e.g. $\R_{[i]}xy$ and $\opt_{\Oi}xy$) occurring in such strings as \emph{relational atoms}. 
We use $\Gamma$, 
 $\Gamma'$, 
  $\ldots$ to denote strings generated by the top right grammar and refer to formulae (e.g. $x : \phi$) occurring in such strings as \emph{labelled formulae}. We take the comma operator to commute and associate in $\rel$ and $\Gamma$ (i.e. $\rel$ and $\Gamma$ are multisets) and read its presence in $\rel$ and $\Gamma$, respectively, as a conjunction and a disjunction (cf.~Def.~\ref{def:sequent-semantics}). We let $\varepsilon$ represent the \emph{empty string}.
  \footnote{The empty string $\varepsilon$ serves as an identity element for comma (e.g. $R_{\Box}xy, \varepsilon \vdash x : p, \varepsilon, y : q$ identifies with $R_{\Box}xy \vdash x : p, y : q$). If $\varepsilon$ is the entire antecedent or consequent, it is left empty by convention (e.g. $\varepsilon \vdash \Gamma$ identifies with $\vdash \Gamma$). In what follows, it suffices to leave $\varepsilon$ implicit.} 
Last, we use $Lab(\rel \vdash \Gamma)$ 
to represent the set of labels contained in $\rel \vdash \Gamma$. 
 
\begin{figure}[t]
\noindent
\begin{small}
\begin{center}
\begin{tabular}{c c} 
\AxiomC{ }
\RightLabel{$\id$}
\UnaryInfC{$\rel \vdash x:p, x:\neg p, \Gamma$}
\DisplayProof

&

\AxiomC{$\rel \vdash x: \phi, w : \psi, \Gamma$}
\RightLabel{$\disr$}
\UnaryInfC{$\rel \vdash x: \phi \vee \psi, \Gamma$}
\DisplayProof
\end{tabular}
\end{center}

\begin{center}
\begin{tabular}{c @{\hskip 2em} c}
\AxiomC{$\rel \vdash x: \phi, \Gamma$}
\AxiomC{$\rel \vdash x: \psi, \Gamma$}
\RightLabel{$\conr$}
\BinaryInfC{$\rel \vdash x: \phi \wedge \psi, \Gamma$}
\DisplayProof

&

\AxiomC{$\rel,\R_{[1]}x_{1}y, ..., \R_{[n]}x_{n}y \vdash \Gamma$}
\RightLabel{$\ioa^{\dag_{2}}$}
\UnaryInfC{$\rel \vdash \Gamma$}
\DisplayProof
\end{tabular}
\end{center}

\begin{center}
\begin{tabular}{c c}
\AxiomC{$\rel, R_{\Box} xy \vdash y: \phi, \Gamma$}
\RightLabel{$\settr^{\dag_{1}}$}
\UnaryInfC{$\rel \vdash x: \Box \phi, \Gamma$}
\DisplayProof

&

\AxiomC{$\rel \vdash x: \Diamond \phi, y: \phi, \Gamma$}
\RightLabel{$\settdiar^{\dag_{3}}$}
\UnaryInfC{$\rel \vdash x: \Diamond \phi, \Gamma$}
\DisplayProof
\end{tabular}
\end{center}

%
%

\begin{center}
\begin{tabular}{c @{\hskip 2em} c}

\AxiomC{$\rel, \R_{[i]}xy \vdash y: \phi, \Gamma$}
\RightLabel{$\stitr^{\dag_{1}}$}
\UnaryInfC{$\rel \vdash x: [i] \phi, \Gamma$}
\DisplayProof

&

\AxiomC{$\rel \vdash x: \agdia \phi, y:\phi, \Gamma$}
\RightLabel{$\stitdiar^{\dag_{4}}$}
\UnaryInfC{$\rel \vdash x: \agdia \phi, \Gamma$}
\DisplayProof

\end{tabular}
\end{center}

\begin{center}
\begin{tabular}{c @{\hskip 2em} c}
\AxiomC{$\rel, \opt_{\Oi}xy \vdash y : \phi, \Gamma$}
\RightLabel{$\Oir^{\dag_{1}}$}
\UnaryInfC{$\rel \vdash x : \Oi \phi, \Gamma$}
\DisplayProof

&

\AxiomC{$\rel, \opt_{\Oi}xy \vdash x : \ominus_{i} \phi, y : \phi, \Gamma$}
\RightLabel{$\Odiamrule$}
\UnaryInfC{$\rel, \opt_{\Oi} xy \vdash x : \ominus_{i} \phi, \Gamma$}
\DisplayProof
\end{tabular}
\end{center}
\begin{center}
\begin{tabular}{c}
\AxiomC{$\rel, \opt_{\Oi} xz, \opt_{\Oi} yz \vdash \Gamma$}
\RL{$(\mathsf{D1}_{i})^{\dag_{3}}$}
\UIC{$\rel, \opt_{\Oi} xz \vdash \Gamma$}
\DisplayProof
\end{tabular}
\end{center}
\end{small}

\hrule
\caption{The calculi $\gds{n}{}$ (with $n = |Ag|$). 
 $\dag_{1}$ on $\settr$, $\stitr$, and $\Oir$ indicates that $y$ is an eigenvariable, i.e. $y$ does not occur in the rule's conclusion. $\dag_2$ on $\ioa$ states that $y$ is an eigenvariable and for all $i \in \{1, \ldots, n\}$, $x_{i} \sim^{\rel}_{\Diamond} x_{i+1}$ (see Def.~\protect\ref{def:sett-path}). $\dag_3$ on $\settdiar$ and $\done$ and $\dag_4$ on $\stitdiar$ state, respectively, that $x \sim^{\rel}_{\Diamond} y$ and $x \sim^{\rel}_{i} y$ (see Def.~\protect\ref{def:sett-path} and Def.~\protect\ref{def:stit-path}). We have $\stitr$, $\stitdiar$, $\Oir$, $\Odiamrule$, and $(\mathsf{D1}_{i})$ rules for each $i \in Ag$.}
\label{fig:base-calculus}
\end{figure}
  
 The calculus $\gds{n}{}$ for the minimal deontic $\sti$ logic $\ds{n}$ (with $n \in \mathbb{N}$) 
 is shown in Fig.~\ref{fig:base-calculus}. Intuitively, 
$\gds{n}{}$ can be seen as a 
transformation 
of the semantic clauses of Def.~\ref{def:semantics} and $\ds{n}$-frame properties of Def.~\ref{def:frames-models} into inference rules. For example, the $\id$ rule encodes the fact that either a propositional atom $p$ holds at a world in a $\ds{n}$-model, or it does not (recall that a comma in the consequent reads disjunctively). The rules $\ioa$ and $(\mathsf{D1}_{i})$ encode, respectively, condition \textbf{C3} 
(i.e. independence of agents) and 
condition \textbf{D1} of Def.~\ref{def:frames-models}. A particular feature of refinement, is that we can incorporate the semantic behavior of modalities into their corresponding rules. For instance, the side condition $\dag_4$ of the $(\langle i\rangle)$ rule integrates the fact that $\langle i\rangle$ is semantically characterized as an equivalence relation.  These side conditions---including those for the rules $\settdiar$, $\stitdiar$ and $\done$---rely on the notion of a \emph{$\Diamond$-} and \emph{$\langle i \rangle$-path}. 

%
 
\begin{definition}\textbf{($\langle i\rangle$-path)}\label{def:stit-path} Let $x \sim_i y \in \{R_{[i]}xy, R_{[i]}yx\}$ and $\Lambda = \rel \vdash \Gamma$. An \emph{$\langle i\rangle$-path} of relational atoms from a label $x$ to $y$ occurs in $\Lambda$ (written as $x \sim^{\rel}_{i} y$) iff $x = y$, $x \sim_i y$, or there exist labels $z_{j}$ ($j \in \{1,\ldots,k\}$) such that $x \sim_i z_{1}, \ldots, z_{k} \sim_i y$ occurs in $\rel$.
\end{definition}

\begin{definition}\textbf{($\Diam$-path)}\label{def:sett-path} Let $x \sim_{\Diam} y \in \{R_{\Box}xy, R_{\Box}yx\} \cup \{R_{[i]}xy, R_{[i]}yx \ | \ i\in Ag \}$, and $\Lambda = \rel \vdash \Gamma$. An \emph{$\Diam$-path} of relational atoms from a label $x$ to $y$ occurs in $\Lambda$ (written as $x \sim^{\rel}_{\Diam} y$) iff $x = y$, $x \sim_{\Diam} y$, 
 or there exist labels $z_{j}$ ($j \in \{1,\ldots,k\}$) such that $x \sim_{\Diam} z_{1}, \ldots, z_{k} \sim_{\Diam} y$ occurs in $\rel$.

%
\end{definition}

The definition of an $\langle i \rangle$- and $\Diam$-path captures a notion of reachability that simulates the fact that $\RI$ and $\RB$ are equivalence relations. Moreover, $\Diam$-paths also incorporate the fact that choices are subsumed under moments (cf.~\textbf{C2} of Def.~\ref{def:frames-models}). Observe that the $\Diam$-path condition on $\ioa$ indicates that `independence of agents' can only be applied to choices that occur at the same moment. One of the advantages of using such paths as side conditions is that it allows us to reduce the number of rules in our calculi~\cite{LyoBer19}.


Fig.~\ref{fig:structural-rules} contains four additional structural rules with which the base calculi $\gds{n}{}$ can be extended. As their names suggest, these rules simulate their respective frame properties (cf. Def.~\ref{def:frames-models-all}). In doing so, we obtain calculi for the logics $\dsx{n}{X}$. As an example,  the logic $\dsx{n}{\{\textbf{D2,D4}\}}$ corresponds to the calculus
  $\gds{n}{\{\dtwo,\dfour \ | \ \textit{i} \in \textit{Ag} \}}$ (henceforth, we write $\gds{n}{\{D2_i,D3_i\}}$). 


\begin{definition}\textbf{(The calculi $\gds{n}{X}$)}\label{def:gdsnx-all} Let $\dsx{n}{X}$ be a logic from Def.~\ref{def:frames-models-all}. Let $n = |Ag| \in \mathbb{N}$ and $\mathsf{X} \subseteq \{\textbf{D2}, \textbf{D3}, \textbf{D4}, \textbf{D5}\}$. We define $\gds{n}{X}$ to consist of $\gds{n}{}$ extended with $(\mathsf{DK}_{i})$, if $\textbf{DK} \in \mathsf{X}$ (with $K \in \{2,3,4,5\}$) for all $i \in Ag$.
\end{definition}

 We point out that the first-order condition \textbf{D5} (Def.~\ref{def:frames-models}) is a \textit{generalized geometric axiom}. In~\cite{Neg14}, it was shown that properties of this form require \textit{system of rules} in their corresponding calculi. We adopt this approach in our calculi as well and use $\dfive$ to denote the system of rules $\langle \dfivei, \dfiveii\rangle$ (see Fig.~\ref{fig:structural-rules}). The global restriction $\ddag$ imposed on applying $\dfive$ is that, although we may use $\dfivei$ wherever, if we use $\dfiveii$ we must also use $\dfivei$ further down in the derivation. 
 In Sect.~\ref{sect:analysis}, Ex.~\ref{ex:transforming-d5-tod2d3-derivation} demonstrates an application of $\dfive$. 

To confirm soundness and completeness for our calculi---thus 
demonstrating an equivalence between the semantics ($\dsx{n}{X}$) 
 and proof-theory ($\gds{n}{X}$) 
  of our logics---we  need to provide a semantic interpretations of sequents: 

\begin{definition}\textbf{(Sequent Semantics)}\label{def:sequent-semantics} Let $M$ be a $\dsx{n}{X}$-model with domain $W$ and $I$ an \emph{interpretation function} mapping labels to worlds; i.e. $I{:} \ Lab \mapsto W$. A sequent $\Lambda = \rel \vdash \Gamma$ is \emph{satisfied} in $M$ with $I$ (written, $M,I \models \Lambda$) iff for all relational atoms $R_{\alpha} xy \in \rel$ (where $\alpha \in \{\Box\} \cup \{ [i], \Oi \ | \ i\in Ag\}$), if $R_{\alpha}I(x)I(y)$ holds in $M$, then there exists a $z : \phi \in \Gamma$ such that $M, I(z) \Vdash \phi$. 
$\Lambda$ is \emph{valid relative to $\mathsf{DS}_{n}\mathsf{X}$} iff it is satisfiable in any 
 $\dsx{n}{X}$-model $M$ with any  %
 $I$.
\end{definition}

%
%
%
%
%
%
%
%
%
%

\begin{figure}[t]

\noindent
\begin{center}
\resizebox{\columnwidth}{!}{
\begin{tabular}{c c}
\AxiomC{$\rel, \opt_{\Oi}xy \vdash \Gamma$}
\RightLabel{$(\mathsf{D2}_i)^{\dag_{1}}$}
\UnaryInfC{$\rel \vdash \Gamma$}
\noLine
\UIC{}
\noLine
\UIC{}
\noLine
\UIC{}
\noLine
\UIC{}
\AxiomC{$\rel, \opt_{\Oi}xy, \RB xy \vdash \Gamma$}
\RightLabel{$(\mathsf{D3}_{i})$}
\UnaryInfC{$\rel, \RO xy \vdash \Gamma$}
\noLine
\UIC{}
\noLine
\UIC{}
\noLine
\UIC{}
\noLine
\UIC{}
\noLine
\BIC{$\rel, \opt_{\Oi}xy, \R_{\Oi}xz \vdash \Gamma$} 
\RightLabel{$(\mathsf{D4}_{i})^{\dag_{2}}$}
\UnaryInfC{$\rel, \opt_{\Oi}xy \vdash \Gamma$} 
\DisplayProof

&

\AxiomC{$\rel',\opt_{\Oi}xz \vdash \Gamma'$}
\RightLabel{$\dfiveii^{\dag_{2}}$}
\UIC{$\rel' \vdash \Gamma'$} 
\noLine
\UIC{$\vdots$}
\noLine
\UIC{$\rel, \opt_{\Oi}xy \vdash \Gamma$}
\RightLabel{$\dfivei^{\dag_1}$}
\UnaryInfC{$\rel \vdash \Gamma$}
\DisplayProof
\end{tabular}
}
\end{center}

\hrule
\caption{The Deontic Structural Rules. Condition $\dag_{1}$ on $\dtwo$ and $\dfivei$ states that $y$ is a eigenvariable. Condition $\dag_{2}$ on $\dfour$ and $\dfiveii$ indicates that $y \sim^{\rel}_{i} z$ (Def.~\ref{def:stit-path}). Last, we let $\dfive^{\ddag}$ be $\langle \dfivei, \dfiveii\rangle$ with $\ddag$ the global restriction (mentioned below), and have $\dtwo, \dthree, \dfour, \dfive$ rules for each $i \in Ag$.}\label{fig:structural-rules}
\end{figure}

\begin{theorem}[Soundness and Completeness of $\gds{n}{X}$]\label{thm:adequacy-dstitn} A sequent $\Lambda$ is derivable in $\gds{n}{X}$ iff it is valid relative to $\mathsf{DS}_{n}\mathsf{X}$. 
\end{theorem}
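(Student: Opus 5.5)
Soundness is an induction on the height of a derivation in $\gds{n}{X}$, showing that every rule preserves validity relative to $\dsx{n}{X}$ in the sense of Def.~\ref{def:sequent-semantics}. Instances of $\id$ are valid because $x:p$ and $x:\neg p$ cannot both fail at $I(x)$. For $\disr$ and $\conr$ validity transfers directly. For the eigenvariable rules $\settr$, $\stitr$, $\Oir$, $\dtwo$ and $\dfivei$ we argue by contraposition. Suppose $M,I$ falsifies the conclusion. We then extend $I$ to the fresh label $y$, mapping it to a witness world: a $\RB$-, $\RI$- or $\RO$-successor where $\phi$ fails, or, for $\dtwo$, the $\RO$-successor given by \textbf{D2}. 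This falsifies the premise.

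For the rules with path side conditions ($\settdiar$, $\stitdiar$, $\done$, $\dfour$, $\ioa$), we first prove a small lemma. If $x \sim^{\rel}_{\Diam} y$ (resp.\ $x \sim^{\rel}_{i} y$) occurs in $\rel$ and $M,I$ satisfies all relational atoms of $\rel$, then $\RB I(x)I(y)$ (resp.\ $\RI I(x) I(y)$) holds. This follows from \textbf{C1}, \textbf{C2} (equivalence, and $\RI\subseteq\RB$) by induction on path length. With the lemma, $\ioa$ is sound by \textbf{C3}, since all $x_i$ lie in one moment. $\done$ is sound by \textbf{D1} together with symmetry of $\RB$, and $\dthree$ and $\dfour$ are sound by \textbf{D3} and \textbf{D4}. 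The system $\dfive$ needs a global argument. Because of the restriction $\ddag$, every $\dfiveii$ instance has a $\dfivei$ instance below it that introduced the label $y$. We choose $I(y)$ at that $\dfivei$ step to be the ideal choice witness $v$ from \textbf{D5}. Then every later $\dfiveii$ step adding $\RO xz$ with $y\sim_i z$ is justified by \textbf{D5} and the path lemma. Formally, I would prove soundness for the whole subderivation rooted at the $\dfivei$ instance, carrying the fixed interpretation of $y$ upward, as in Negri's treatment of systems of rules~\cite{Neg14}.

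Completeness I would prove directly, by a proof-search / counter-model construction, rather than via a Hilbert system. This direct route also matches the later use of failed proof search in Sect.~\ref{sect:analysis}. Given a sequent $\Lambda$, we build a (possibly infinite) reduction tree by fairly applying all rules of $\gds{n}{X}$ bottom-up. Each round processes the following steps:
\begin{itemize}
\item each compound labelled formula, with $\settdiar$, $\stitdiar$ and $\Odiamrule$ instantiated at every appropriate label present;
\item $\ioa$ for every tuple of labels connected by $\Diam$-paths;
\item $\done$ for all suitable triples;
\item $\dtwo$ and $\dfivei$ for every label (adding fresh successors);
\item $\dthree$, $\dfour$ and $\dfiveii$ closures.
\end{itemize}
If every branch closes with $\id$, we obtain a derivation. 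Otherwise König's lemma gives an infinite or saturated open branch. From it we read off a model whose worlds are the branch's labels. $\RB$ is the equivalence closure of $\Diam$-connectivity, $\RI$ is the equivalence closure of the $\RI$-atoms, and $\RO$ is the union of the $\RO$-atoms. The valuation makes $p$ true at $x$ iff $x:\neg p$ occurs on the branch. A truth lemma, by induction on formulas, then shows that every $x:\phi$ on the branch is false at $x$. Hence $\Lambda$ is not valid.

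The main obstacle is verifying that this branch model satisfies all frame conditions. The troublesome cases are \textbf{C3} and \textbf{D5}, together with the interaction between \textbf{D1} and \textbf{D3}. For \textbf{C3}, fairness of $\ioa$ must range over all $n$-tuples of labels in a single moment, including labels created later. For \textbf{D5}, we must check that the $y$ introduced by $\dfivei$ at $x$ has every world of its $\RI$-class $\RO$-accessible from $x$. This holds because $\dfiveii$ is fairly applied to all labels reachable by $i$-paths, including those added by later $\ioa$ steps. For \textbf{D1} we need the $\RO$-relation to respect the moment equivalence, which is exactly what fair application of $\done$ along $\Diam$-paths ensures. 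The scheduling has to interleave the generation of fresh labels with these closure rules so that every required instance is eventually applied on the branch. With that done, each condition in $\mathsf{X}$ holds, and the truth lemma completes the argument.
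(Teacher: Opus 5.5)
Your proposal is correct and follows essentially the same route as the paper. Soundness is shown by checking that each rule preserves validity, with the system $\dfive$ handled globally in Negri's style; the paper simply cites Negri's result on systems of rules here, whereas you spell out the path lemma and fix the interpretation of the $\dfivei$ eigenvariable as the \textbf{D5} witness. Completeness uses a Negri-style reduction tree and K\"onig's lemma, and reads the countermodel off an infinite branch: $\RB$ and $\RI$ come from $\Diam$- and $\langle i\rangle$-paths, $\RO$ from the atoms, the valuation from occurrences of $x:\neg p$, and the argument closes with a truth lemma.
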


\begin{proof} Follows from Thm.~\ref{thm:sound-dstitn} and~\ref{thm:compl-dstitn}. See the Appendix~\ref{appendix} for details.
\end{proof}


%

\section{A formal analysis of Deontic STIT and OiC}\label{sect:analysis}

In this section, we put our $\gds{n}{X}$ calculi to work. First, we make use of our calculi to organize our logics in terms of their strength---observing which are equivalent, distinct, or subsumed by another. Second, we discuss the logical (in)dependencies between our various OiC principles by confirming the minimal logic in which each principle is validated.



\subsection{A Taxonomy of Deontic $\sti$ Logics}

In Fig.~\ref{fig:oic-relations}, a lattice is provided ordering the sixteen deontic $\sti$ calculi of Def.~\ref{def:gdsnx-all} on the basis of their respective strength (reflexive and transitive edges are left implicit). 
We consider a calculus $\gds{n}{X}$ stronger than another calculus $\gds{n}{Y}$ whenever the former generates at least the same set of theorems as the latter. Consequently, 
the lattice simultaneously orders the deontic $\sti$ logics of Def.~\ref{def:frames-models-all}, generated by these calculi, on the basis of their expressivity. In Fig.~\ref{fig:oic-relations}, the calculi are ordered bottom-up: 
 $\gds{n}{}$ is the weakest system, generating the smallest logic subsumed by all others, whereas $\gds{n}{\{D2_i,D3_i,D4_i\}}$ is the strongest calculus with its logic subsuming all others. Notice that the latter calculus generates the
  traditional deontic $\sti$ logic of 
 \cite{Hor01,Mur04}. To determine the existence of a directed edge from one calculus $\gds{n}{X}$ to another $\gds{n}{Y}$ in the lattice, we need to show that  every derivation in the former can be transformed into a derivation in the latter. As an example of this procedure, we consider the edge from $\gds{n}{\{D3_i,D5_i\}}$ to $\gds{n}{\{D2_i,D3_i,D4_i\}}$. 

\begin{example}\label{ex:transforming-d5-tod2d3-derivation} To transform a $\gds{n}{\{D3_i,D5_i\}}$-derivation into a derivation of $\gds{n}{\{D2_i,D3_i,D4_i\}}$, it suffices to show that each instance of $\dfivei$ and $\dfiveii$ can be replaced, respectively, by instances of $\dtwo$ and $\dfour$. For example: 
\vspace{0.2cm}
\begin{flushleft}
\begin{minipage}{0.45\linewidth}
\scalebox{0.655}{
\AXC{$\RB xy, \RO xy, \RI yz, \RO xz \vdash z: \neg \phi,..., z: \phi  $}\RL{$(\ODi)$}\LL{}
\UIC{$\RB xy, \RO xy, \RI yz, \RO xz \vdash x: \ODi \neg \phi,..., z: \phi    $}\RL{$\dfiveii$}
\UIC{$\RB xy, \RO xy, \RI yz \vdash x: \ODi \neg \phi,..., z: \phi   $}\RL{$([i])$}
\UIC{$\RB xy, \RO xy \vdash x: \ODi \neg \phi,..., y: [i] \phi  $}\RL{$(\Diamond)$}
\UIC{$\RB xy, \RO xy \vdash x: \ODi \neg \phi, x : \Diam [i] \phi $}\RL{$\dthree\quad\quad\quad$ {\huge $\leadsto$}}
\UIC{$\RO xy \vdash x: \ODi \neg \phi, x : \Diam [i] \phi $}\RL{$\dfivei$}
\UIC{$\vdash x: \ODi \neg \phi, x : \Diam [i] \phi $}\RL{$(\lor) $}\LL{$\quad\quad\quad$}
\UIC{$\vdash x: \ODi \neg \phi \lor \Diam [i] \phi$}
\RightLabel{=}
\dottedLine
\UnaryInfC{$\vdash x : \Oi \phi \rightarrow \Diam[i]\phi$}
\DisplayProof
%
%
%
}
\end{minipage}
\hspace{0.6cm}
\begin{minipage}{0.32\linewidth}
\scalebox{0.655}{
\AXC{$\RB xy, \RO xy, \RI yz, \RO xz \vdash z: \neg \phi,..., z: \phi  $}\RL{$(\ODi)$}\LL{}
\UIC{$\RB xy, \RO xy, \RI yz, \RO xz \vdash x: \ODi \neg \phi,..., z: \phi    $}\RL{$\dfour$}
\UIC{$\RB xy, \RO xy, \RI yz \vdash x: \ODi \neg \phi,..., z: \phi   $}\RL{$([i])$}
\UIC{$\RB xy, \RO xy \vdash x: \ODi \neg \phi,..., y: [i] \phi  $}\RL{$(\Diamond)$}
\UIC{$\RB xy, \RO xy \vdash x: \ODi \neg \phi, x : \Diam [i] \phi $}\RL{$\dthree$}
\UIC{$\RO xy \vdash x: \ODi \neg \phi, x : \Diam [i] \phi $}\RL{$\dtwo$}
\UIC{$\vdash x: \ODi \neg \phi, x : \Diam [i] \phi $}\RL{$(\lor) $}\LL{$\quad\quad\quad$}
\UIC{$\vdash x: \ODi \neg \phi \lor \Diam [i] \phi$}
\RightLabel{=}
\dottedLine
\UnaryInfC{$\vdash x : \Oi \phi \rightarrow \Diam[i]\phi$}
\DisplayProof
}
\end{minipage}
\end{flushleft}
\vspace{0.2cm}
The non-existence of a directed edge in the opposite direction is implied by the fact that $\gds{n}{\{D2_i,D3_i,D4_i\}} \vdash\Oi\phi\rightarrow\Oi[i]\phi$ and $\gds{n}{\{D3_i,D5_i\}} \not\vdash\Oi\phi\rightarrow\Oi[i]\phi$. The latter is shown 
 through failed proof search (See Ex.~\ref{ex:underivable-oina} for an illustration of how failed proof-search can be used to determine underivability.)
\end{example}
To determine that two calculi $\gds{n}{X}$ and $\gds{n}{Y}$ are equivalent (i.e. $\gds{n}{X} \equiv \gds{n}{Y}$), thus implying that the associated logics are identical, one shows that every derivation in the former can be transformed into a derivation in the latter, and vice-versa. Last, to prove that two calculi $\gds{n}{X}$ and $\gds{n}{Y}$ 
 are independent---yielding incomparable logics---it is sufficient to show that there exist formulae $\phi$ and $\psi$ such that $\gds{n}{X} \vdash \phi$, $\gds{n}{Y} \not\vdash \phi$, $\gds{n}{Y} \vdash \psi$, and $\gds{n}{X} \not\vdash \psi$. We come back to this in the following subsection when we consider an example of an underivable OiC formula.

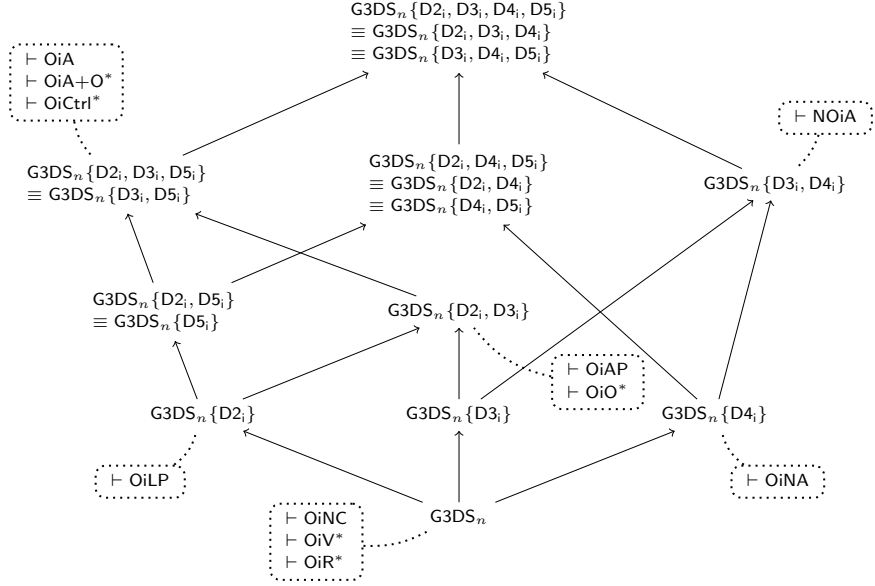
\begin{figure}

\begin{center}
\scalebox{0.945}{
\begin{scriptsize}

\begin{tikzpicture}

\node[] (234) {\begin{tabular}{l}
$\gds{n}{\{D2_i,D3_i,D4_i,D5_i\}}$\\
$\equiv\gds{n}{\{D2_i,D3_i,D4_i\}}$\\
$\equiv\gds{n}{\{D3_i,D4_i,D5_i\}}$
\end{tabular}};


\node[] (24) [below=of 234] {\begin{tabular}{l}
$\gds{n}{\{D2_i,D4_i,D5_i\}}$\\
$\equiv\gds{n}{\{D2_i,D4_i\}}$\\
$\equiv\gds{n}{\{D4_i,D5_i\}}$
\end{tabular}};

\node[] (35) [left=of 24, xshift=-18pt] 
{\begin{tabular}{l}
$\gds{n}{\{D2_i,D3_i,D5_i\}}$\\
$\equiv\gds{n}{\{D3_i,D5_i\}}$\\
\end{tabular}};

\node[draw,thick,dotted,rectangle, color=black,rounded corners, inner xsep=0pt, inner ysep=2pt,] (oic35) [above=of 35, xshift=-20pt, yshift=-15pt] {\begin{tabular}{l}
$\vdash\oia$ \\
$\vdash\oiao^{\ast}\!\!$ \\
$\vdash\oictrl^{\ast}\!\!$
\end{tabular}};

\node[] (34) [right=of 24,xshift=21pt] {$\gds{n}{\{D3_i,D4_i\}}$};

\node[] (23) [below=of 24] {$\gds{n}{\{D2_i,D3_i\}}$};

\node[draw, thick, dotted,rectangle, color=black,rounded corners, inner xsep=0pt, inner ysep=2pt,] (oic23) [below=of 23, xshift=55pt, yshift=18pt] {\begin{tabular}{l}
$\vdash \oipp$\\
$\vdash \oio^{\ast}$
\end{tabular}};

\node[] (5) [left=of 23, xshift=-20pt] 
{\begin{tabular}{l}
$\gds{n}{\{D2_i,D5_i\}}$\\
$\equiv\gds{n}{\{D5_i\}}$\\
\end{tabular}};

\node[draw, thick, dotted,rectangle, color=black,rounded corners, inner xsep=0pt, inner ysep=2pt,] (oic5) [above=of 34, xshift=20pt, yshift=-15pt] {\begin{tabular}{l}
$\vdash\mathsf{NOiA}$
\end{tabular}};

\node[] (3) [below=of 23] {$\gds{n}{\{D3_i\}}$};

\node[] (2) [left=of 3, xshift=-25pt] {$\gds{n}{\{D2_i\}}$};

\node[draw,thick,dotted,rectangle, color=black,rounded corners, inner xsep=0pt, inner ysep=2pt,] (oic2) [below=of 2, xshift=-25pt, yshift=15pt] {\begin{tabular}{l}
$\vdash\oilp$
\end{tabular}};

\node[] (4) [right=of 3, xshift=25pt] {$\gds{n}{\{D4_i\}}$};

\node[draw,thick,dotted,rectangle, color=black,rounded corners, inner xsep=0pt, inner ysep=2pt,] (oic4) [below=of 4, xshift=25pt, yshift=15pt] {\begin{tabular}{l}
$\vdash\oina$
\end{tabular}};

\node[] (0) [below=of 3
] {$\gds{n}{}$};

\node[draw,thick,dotted,rectangle, color=black,rounded corners, inner xsep=0pt, inner ysep=2pt,] (oic0) [left=of 0, xshift=5pt, yshift=-10pt] {\begin{tabular}{l}
$\vdash\oinc$\\
$\vdash\oiv^{\ast}$ \\
$\vdash\oir^{\ast}$ \\
\end{tabular}};

\path[<-,draw] (234) edge (35);

\path[<-,draw] (234) edge (24);

\path[<-,draw] (234) edge (34);

\path[<-,draw] (35) edge (23);

\path[<-,draw] (35) edge (5);

\path[<-,draw] (24) edge (5);

\path[<-,draw] (24) edge (4);

\path[<-,draw] (34) edge (3);

\path[<-,draw] (34) edge (4);

\path[<-,draw] (23) edge (2);

\path[<-,draw] (23) edge (3);

\path[<-,draw] (5) edge (2);

\path[<-,draw] (2) edge (0);

\path[<-,draw] (3) edge (0);

\path[<-,draw] (4) edge (0);

\path[-
,dotted,draw,thick, bend right=15
] (oic35) edge (35);

\path[-
,
dotted,draw, thick, bend left=15
] (oic5) edge (34);

\path[-
,dotted,draw, thick, bend right=20
] (oic2) edge (2);

\draw[-
,dotted,draw, thick, bend right=15
] (oic0) edge (0);

\path[-
,dotted,draw, thick,bend left=20
] (oic4) edge (4);

\path[-
,dotted,draw, thick, bend left=20
 ] (oic23) edge (23);
\end{tikzpicture}

\end{scriptsize}
}
\end{center}
\vspace{-0.3cm}
\caption{The lattice of deontic $\sti$ calculi. 
 Directed edges point from weaker calculi to stronger calculi, 
 consequently
 ordering the corresponding logics w.r.t. their expressivity (reflexive and transitive edges are left implicit). 
 We use $\equiv$ to denote equivalent calculi. 
 Dotted nodes show which calculi should at least be adopted to make the indicated OiC principles theorems (for the final OiC formalizations see Fig.~\ref{fig:oic-calculi}). 
 }\label{fig:oic-relations}
\end{figure}

\subsection{Logical (In)Dependencies of OiC Principles}

Fig.~\ref{fig:oic-relations} also represents which deontic $\sti$ calculi should at least be adopted to make certain OiC principles theorems of the corresponding logics. These principles were initially formalized in Sect.~\ref{sect:principles}. However, as discussed in Sect.~\ref{sect:logics}, in order to formally represent \textit{deliberative} readings of OiC in a normal modal setting, we must replace the initial antecedent $\Oi\phi$ with its deliberative correspondent $\Oi^d\phi$ in $\oiv,\oir,\oio,\oiao$ and with $\Oi^c\phi$ in $\oictrl$. The final list of OiC formalizations is presented in Fig.~\ref{fig:oic-calculi}. Although for now the above suffices---i.e. the approach being in line with the traditional treatment of deliberative agency 
\cite{BelPerXu01,Hor01,HorBel95}---the solution may be considered \textit{ad hoc}. 
 We note that these deliberative canons may alternatively be captured as follows: 
 (i) through characterizing deliberation directly in the logic, taking $\Oi^d$ and $\Oi^c$ as primitive operators (cf.~\cite{Xu98}), or (ii) through 
  characterizing contingency via the use of sanction constants 
  (cf.~\cite{AndMoo57}). We leave this to future work.
 


In Ex.~\ref{ex:transforming-d5-tod2d3-derivation}, we saw that $\oia$ is derivable in both $\gds{n}{\{D2_i,D3_i,D4_i\}}$ and $\gds{n}{\{D3_i,D5_i\}}$. What is more, since $\Oi[i]\phi\rightarrow \Oi\phi$ is already a theorem of $\gds{n}{}$, we find that the weaker logic generated by $\gds{n}{\{D3_i,D5_i\}}$ already suffices to accommodate 
OiC 
 of the traditional deontic $\sti$ setting~\cite{Hor01}, that is, $\gds{n}{\{D3_i,D5_i\}} \vdash \Oi[i]\phi\rightarrow \Diam [i]\phi$. 
We emphasize 
 that only through the addition of $\mathsf{D4_i}$ do we restore the position advocated by Horty in \cite{Hor01} (cf. footnote 2). 
 Namely, by adding $\mathsf{D4_i}$ to a calculus, the distinction between $\Oi$ and $\Oi[i]$ collapses---i.e. $\gds{n}{\{D4_i\}}\vdash \Oi\phi \equiv \Oi[i]\phi$---and the agent-dependent obligation operator will demonstrate the same logical behavior as the interpretation of obligation restricted to complete choices; i.e. the `dominance ought'. (NB. In 
  \cite{BerLyo19b} it was 
   shown that the relational characterization of $\Oi$ in $\mathsf{DS}_n \{\textbf{D2},\textbf{D3},\textbf{D4}\}$ is equivalent to the logic of `dominance ought' \cite{Hor01,Mur04}.) 

\begin{figure}
\begin{center}
{\small
\scalebox{0.87}{
\def\arraystretch{1.4}{
\begin{tabular}{l l  | l l  } 
 $\!\!\!\!\!\!\!\!\gds{n}{\{D2_i\}} \vdash \Oi \phi \rightarrow \lnot\! \Oi\!\lnot \phi$   & $\mathsf{OiLP}\!\! $ &
 $\gds{n}{\{D2_i,D3_i\}}\vdash\Oi^d \phi \rightarrow (\Diam \phi \land \Diam \lnot \phi)$ & $\mathsf{OiO}^{\ast} $ 
\\
 $\!\!\!\!\!\!\!\!\gds{n}{\{D2_i,D3_i\}}\vdash\Oi \phi \rightarrow \Diam \phi$  & 
$\oipp\!\!\! $
& $\gds{n}{\{D3_i,D5_i\}}\vdash\Oi^d \phi \rightarrow (\Diam [i] \phi \land \Diam \phi \land \Diam \lnot \phi)\!\!\!\!\!\!$ & $\mathsf{OiA{+}O}^{\ast}\!\!\! $
\\
 $\!\!\!\!\!\!\!\!\gds{n}{\{D3_i,D5_i\}}\vdash\Oi \phi \rightarrow \Diam [i] \phi\!\!\!\!$ 
 & $\mathsf{OiA} $  
 & $\gds{n}{\{D3_i,D5_i\}}\vdash\Oi^c \phi \rightarrow (\Diam [i] \phi \land \Diam [i] \lnot \phi) \!\!\!\!\!$ 
&  $\mathsf{OiCtrl}^{\ast} $ \\
$\!\!\!\!\!\!\!\!\gds{n}{}\vdash\Oi^d \phi \rightarrow \Diam \lnot \phi$ 
& $\mathsf{OiV}^{\ast} $
 & $\gds{n}{}\vdash\Oi\phi \rightarrow \Oi\Diam\phi$  &  $\oinc$\\
 $\!\!\!\!\!\!\!\!\gds{n}{}\vdash\Oi^d \phi \rightarrow \Diam [i]\lnot [i] \phi \!\!\!$ & $\mathsf{OiR}^{\ast} $  
 & $\gds{n}{\{D4_i\}}\vdash\Oi\phi \rightarrow \Oi \Diam [i] \phi$ &  $\oina$\\
\end{tabular}
}
}
}
\end{center}

\hrule
\caption{$\sti$ formalizations of OiC, with the 
 minimal $\gds{n}{X}$
  calculi entailing them. 
}
\label{fig:oic-calculi}
\end{figure}

From a philosophical perspective, 
 Fig.~\ref{fig:oic-relations} gives rise to what we will call the \emph{endorsement principle} of the philosophy of OiC. Namely, the ordering  of calculi 
 tells us which endorsements of which OiC readings will logically commit us to endorsing other OiC readings (within the logical realm of agential choice). For instance, endorsing $\oia$ tells us that we must also endorse the weaker 
  $\oilp$ and $\oipp$ since they are logically
   entailed in the minimal calculus 
  for $\oia$.

Furthermore, the 
 taxonomy of deontic $\sti$ logics shows 
  which readings of OiC are independent from one another. In particular, we note that the normative principle $\oina$ is strictly 
   independent of $\oia, \oilp, \oipp$. An advantage of 
   the present proof theoretic approach is that we can constructively prove why certain readings of OiC fail to entail one another (relative to their 
    calculi):

\begin{example}\label{ex:underivable-oina} 
To show that $\oina$ is not entailed by $\oilp$
 in $\gds{1}{\{D2_1\}}$ one attempts to prove an 
  instance of $\oina$ via bottom-up proof-search (left): 
\vspace{-0.2cm}
\begin{flushleft}
\begin{minipage}{0.45\linewidth}
\scalebox{0.72}{
\AXC{$\vdots$}
\UIC{$R_{\otimes_1} wu, R_{[1]} vz, R_{\otimes_1} wv \vdash w: \ominus_1 \lnot p, v: \lnot p, u: \lnot p, z: p$}\RL{$(\ominus_1)$}
\UIC{$R_{\otimes_1} wu, R_{[1]} vz, R_{\otimes_1} wv \vdash w: \ominus_1 \lnot p, z: p$}\RL{$(\mathsf{D2}_1)$}
\UIC{$R_{[1]} vz, R_{\otimes_1} wv \vdash w: \ominus_1 \lnot p, z: p$}\RL{$([1])\quad\quad\quad\quad\quad\!$ {\huge $\leadsto$}}
\UIC{$R_{\otimes_1} wv \vdash w:  \ominus_1 \lnot p, v:[1] p$}\RL{$(\Diam)$}
\UIC{$R_{\otimes_1} wv \vdash w: \ominus_1 \lnot p, v:\Diam [1] p$}\RL{$(\otimes_1)$}
\UIC{$\vdash w: \ominus_1 \lnot p, w:\otimes_1\Diam [1] p$}\RightLabel{=}
\dottedLine
\UIC{$\vdash w: \otimes_1 p\rightarrow \otimes_1\Diam [1] p$}
\DisplayProof
}
\end{minipage}
\hspace{1.7cm}
\begin{minipage}{0.32\linewidth}
\scalebox{0.72}{
\begin{tabular}{p{5.5cm}}
\begin{tikzpicture}
\node[] (w) [] {$w: p$};
\node[] (u) [right=of w,yshift=0.7cm,xshift=1cm] {$u:p$};
\node[] (v) [right=of w,yshift=-1.2cm,xshift=1cm] {$v:p$};
\node[] (z) [below=of v,yshift=1cm] {$z:\lnot p$};
\node[draw,dotted, thick, rounded corners, inner xsep=3pt, inner ysep=3pt, fit=(w)] (c1) {};
\node[draw, rounded corners, thick,  inner xsep=7pt, inner ysep=7pt, fit=(w)] (m1) {};
\node[draw,dotted, thick, rounded corners, inner xsep=3pt, inner ysep=3pt, fit=(u)] (c2) {};
\node[draw, rounded corners, thick,  inner xsep=7pt, inner ysep=7pt, fit=(u)] (m2) {};
\node[draw,dotted, thick, rounded corners, inner xsep=3pt, inner ysep=1.5pt, fit=(v) (z)] (c3) {};
\node[draw, rounded corners, thick,  inner xsep=7pt, inner ysep=7pt, fit=(v) (z)] (m3) {};
\draw[->,thick] (w) edge node[above] {$\otimes_1$} (u);
\draw[->,thick] (w) edge node[below] {$\otimes_1$} (v);
\draw[reflexive right,thick] (u) edge node[right] {$\otimes_1$} (u);
\draw[reflexive above,thick] (v) edge node[above] {$\otimes_1$} (v);
\draw[->,thick,label=right:$\otimes_1$] (z)  .. controls +(1.5,-0.2) and +(1.5,0.2) ..
node[right] {$\otimes_1$}
(v);
\end{tikzpicture}
\\
$\cdots$ nodes indicate agent $1$'s choices\\
\textemdash~ nodes indicate moments\\
\end{tabular}
}
\end{minipage}
\end{flushleft}
\vspace{0cm}
In theory, the left derivation will be infinite, but a quick inspection of the rules of $\gds{1}{\{D2_1\}}$ (with $Ag=\{1\}$) ensures us that no additional rule application 
 will cause 
the proof to successfully terminate: $\lnot p$ will never be propagated to $z$.  
 The topsequent (left) will give the $\mathsf{DS}_1\{\textbf{D2}\}$-countermodel  for $\oina$ (right), provided that the model is appropriately closed under \textbf{D1} and \textbf{D2}: i.e. $M,w\not\models \oina$ with $W=\{w,v,u,z\}, R_{[1]}=\{(v,z),(z,v)\}, R_{\Box}=\{(v,z),(z,v)\}, R_{\otimes_1}=\{(w,u),(w,v),(u,u),(v,v),(z,v)\}$ and $V(p)=\{w,v,u\}$ (reflexivity is omitted for $R_{[1]}$ and $R_{\Box}$).  
We leave development of terminating proof-search procedures with automated countermodel extraction to future work (cf. \cite{LyoBer19}). 
\end{example}

We close with two remarks: First, 
recall Hintikka's position that 
OiC merely captures the normative disposition that 
 `it \textit{ought} to be that OiC'. An agent-dependent variation of this principle (referred to as $\mathsf{NOiA}$ in Fig.~\ref{fig:oic-relations}) turns out to be a theorem of $\gds{n}{\{D3_i,D4_i\}}$; i.e. $\gds{n}{\{D3_i,D4_i\}}\vdash \Oi (\Oi\phi\rightarrow \Diam [i]\phi)$. 
 Second, we observe that the calculus
  $\gds{n}{\{D5_i\}}$ gives rise to an interesting, yet unaddressed, OiC principle which combines the ideas behind $\oilp$ and $\oina$, namely, $\gds{n}{\{D5_i\}}\vdash \Oi\phi\rightarrow \ODi\Diam[i]\phi$. Loosely, this principle expresses that `ought implies that it is ideally consistent that the agent has the ability to fulfil its duties'. Future research will be directed toward further investigation of the philosophical consequences of our logical taxonomy of deontic $\sti$ logics.
  

\section{Conclusion}\label{sect:conclusion}

In this work, we analyzed, formalized, and compared ten distinct readings of Ought-implies-Can as taken from the philosophical literature. 
We modified the deontic $\sti$ setting to accommodate this variety of OiC principles.  
 Sound and complete deontic $\sti$ calculi were provided of  
 which the aforementioned OiC principles were shown to be theorems. 
 We used these calculi to determine the logical interdependencies between these 
 principles, resulting in a logical taxonomy of Ought-implies-Can according  to each principle's respective strength. In particular, we proposed an endorsement principle describing which OiC readings commit one to other readings logically entailed by the former.  
 
Future work will be twofold: First, from a technical perspective, we aim 
   to provide decision algorithms 
based on the deontic $\sti$ calculi $\gds{n}{X}$, 
 following the work in
  \cite{LyoBer19}. Thus, we will leverage our calculi for the desired automation of normative reasoning within 
  $\sti$.  Furthermore, we aim to logically capture the deliberative OiC principles, bypassing the use of defined deliberative 
 operators.   
 Second, from a more philosophical perspective, future work will be directed toward the identification and analysis of further OiC principles derived from our logical taxonomy of deontic $\sti$ logics. 
 



\Appendix
\section{Soundness and Completeness Proofs}\label{appendix}

\begin{theorem}[Soundness]\label{thm:sound-dstitn} If a sequent $\Lambda$ is derivable in $\mathsf{G3DS}_{n}\mathsf{X}$, then it is valid relative to $\mathsf{DS}_{n}\mathsf{X}$. 
\end{theorem}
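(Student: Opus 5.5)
We argue by induction on the height of a derivation in $\gds{n}{X}$, showing that every initial sequent is valid relative to $\dsx{n}{X}$ and that every rule preserves validity from premises to conclusion. Note that the definition of satisfaction in Def.~\ref{def:sequent-semantics} should be read as: if \emph{all} relational atoms of $\rel$ hold under $I$, then some labelled formula of $\Gamma$ holds under $I$. We fix an arbitrary $\dsx{n}{X}$-model $M$ and interpretation $I$. For each rule we assume that the conclusion is falsified by $M,I$, i.e.\ all relational atoms in the conclusion hold and every labelled formula in its consequent fails. We then construct, possibly by modifying $I$ on an eigenvariable, an interpretation $I'$ that falsifies some premise, which contradicts the inductive hypothesis.

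The initial sequents and propositional rules are immediate. $\id$ holds because exactly one of $I(x)\Vdash p$ and $I(x)\Vdash\neg p$ holds, and $\disr$ and $\conr$ follow from clauses 3--4 of Def.~\ref{def:semantics}. For the box-type rules $\settr$, $\stitr$, and $\Oir$, falsity of $x:\Box\phi$ (respectively $[i]\phi$, $\Oi\phi$) yields a world $u$ with $R_\alpha I(x)u$ and $u\not\Vdash\phi$. We set $I'=I[y\mapsto u]$, which is harmless because $y$ is an eigenvariable and so $I'$ agrees with $I$ on the conclusion; $I'$ then falsifies the premise. The diamond-type rules need one auxiliary lemma: if $x\sim^{\rel}_{i}y$ (respectively $x\sim^{\rel}_{\Diam}y$) and all atoms of $\rel$ hold under $I$, then $R_{[i]}I(x)I(y)$ (respectively $R_\Box I(x)I(y)$). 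We prove it by induction on path length, using that $R_{[i]}$ and $R_\Box$ are equivalence relations (\textbf{C1}, \textbf{C2}) and that $R_{[i]}\subseteq R_\Box$. With this lemma, $\settdiar$ and $\stitdiar$ are sound, since falsity of $x:\Diam\phi$ forces $I(y)\not\Vdash\phi$. The rule $\Odiamrule$ is handled similarly, using the explicit atom $\opt_{\Oi}xy$.

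For the structural rules, each is justified by the corresponding frame property, with eigenvariables reinterpreted where needed. $\ioa$ uses \textbf{C3}: by the lemma, all $I(x_i)$ lie in one moment $R_\Box(I(x_1))$. \textbf{C3} then gives a world in $\bigcap_i R_{[i]}(I(x_i))$, and we send $y$ to it. $(\mathsf{D1}_i)$ combines the lemma with \textbf{D1}. $\dtwo$ uses \textbf{D2} to choose a witness for $y$. $\dthree$ follows directly from \textbf{D3}. $\dfour$ uses the lemma to obtain $R_{[i]}I(y)I(z)$ and then applies \textbf{D4}.

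The main obstacle is the system of rules $\dfive=\langle\dfivei,\dfiveii\rangle$. $\dfiveii$ is not locally sound on its own, because its justification depends on the choice of $y$ made further down at $\dfivei$. We therefore plan to strengthen the induction hypothesis following Negri~\cite{Neg14}. We prove soundness of derivations in which every $\dfiveii$ application with side label $y$ lies above a $\dfivei$ introducing $y$, restricting attention to interpretations under which each such $y$ is mapped to a \textbf{D5}-witness $v$ for $I(x)$. Concretely, when $\dfivei$ is applied we interpret $y$ as such a witness $v$; this exists by \textbf{D5}. The invariant "$I(y)$ is an ideal choice-witness for $I(x)$" then persists upward through the subderivation, since $y$ is fresh and so no other rule constrains it. Any $\dfiveii$ instance above is then sound: $y\sim^{\rel}_i z$ gives $R_{[i]}I(y)I(z)$, and \textbf{D5} yields $\opt_{\Oi}I(x)I(z)$. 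Carrying out this bookkeeping carefully, i.e.\ indexing the induction by the set of pending $\dfivei$-labels together with their witness constraints, is the delicate step. Everything else is routine.
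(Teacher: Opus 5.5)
Your proposal is correct and takes essentially the paper's route: show $\id$ valid and every rule validity-preserving, with the system $\dfive$ as the only non-local case. Where the paper simply cites Negri's general soundness result for systems of rules, you give its semantic counterpart directly, interpreting the $\dfivei$ eigenvariable $y$ as a \textbf{D5}-witness for $I(x)$ and carrying that constraint in a strengthened induction hypothesis. One correction to your justification: the constraint survives the rules above $\dfivei$ not because $y$ is fresh there, but because $\opt_{\Oi}xy$ stays in every antecedent above $\dfivei$. Hence no eigenvariable rule higher up can reinterpret $x$ or $y$.
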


\begin{proof} It suffices to show that $\id$ is valid and each rule of $\mathsf{G3DS}_{n}\mathsf{X}$ preserves validity relative to $\mathsf{DS}_{n}\mathsf{X}$. With the exception of $\dfive = \langle \dfivei, \dfiveii \rangle$, all cases are relatively straightforward (cf.~\cite{BerLyo19a,LyoBer19}). The $\dfive$ case follows from the general soundness result for systems of rules presented in~\cite{Neg14}.
\end{proof}

\begin{lemma}\label{lem:compl-dstitn} For any sequent $\Lambda$, either $\Lambda$ is provable in $\mathsf{G3DS}_{n}\mathsf{X}$, or there exists a $\mathsf{DS}_{n}\mathsf{X}$-model $M$ with $I$ such that $M, I \not\models \Lambda$.
\end{lemma}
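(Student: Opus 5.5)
We argue by a standard systematic proof-search and counter-model construction in the style of Negri~\cite{Neg14} and \cite{LyoBer19}. Starting from $\Lambda$, we build a (possibly infinite) reduction tree by applying the rules of $\gds{n}{X}$ bottom-up in a fair, round-robin order. At each stage and for each leaf, we apply every applicable logical rule, namely $\disr$, $\conr$, $\settr$, $\stitr$, $\Oir$ with fresh eigenvariables, and $\settdiar$, $\stitdiar$, $\Odiamrule$ for every label $y$ satisfying the relevant path condition. We then apply the structural rules: $\ioa$ for every tuple $x_1,\ldots,x_n$ of labels pairwise connected by $\Diam$-paths, $(\mathsf{D1}_i)$ for all path-connected $y$, and those of $\dtwo,\dthree,\dfour,\dfive$ belonging to $\mathsf{X}$. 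For $\dfive$ we apply $\dfivei$ once to each label $x$, introducing a fresh $y$ with $\opt_{\Oi}xy$. Afterwards, $\dfiveii$ is applied for every $z$ with $y \sim^{\rel}_i z$; this respects the global restriction $\ddag$. Diamond-type formulae are retained in the premises, so they are re-expanded whenever new labels appear. If every branch closes with $\id$, we have a derivation of $\Lambda$. Otherwise, by König's lemma there is an infinite branch, or a finite branch whose top sequent is saturated but not an axiom.

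From such an open branch $\mathcal{B}$ we take the union $\rel^{\ast}$ of all relational atoms and the union $\Gamma^{\ast}$ of all labelled formulae along $\mathcal{B}$, and define the model $M$ as follows. Let $W = Lab(\mathcal{B})$. Let $\RB$ be the reflexive, symmetric, transitive closure of $\sim_{\Diam}$, so that $\RB xy$ iff $x \sim^{\rel^{\ast}}_{\Diam} y$. Let $\RI$ be the corresponding closure of $\sim_i$. Let $\opt_{\Oi}$ be $\{(x,y) \mid \opt_{\Oi}xy \in \rel^{\ast}\}$. Let $V(p) = \{x \mid x:\neg p \in \Gamma^{\ast}\}$, and let $I$ be the identity.

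We then verify the frame conditions.
\begin{itemize}
\item \textbf{C1} and \textbf{C2} hold by construction, since $\RI \subseteq \RB$ follows from the definition of $\Diam$-paths.
\item \textbf{C3} holds because, by fairness, $\ioa$ was applied to every tuple of $\Diam$-connected labels, so some $y$ with $\R_{[i]}x_iy$ exists for each $i$.
\item \textbf{D1} holds because $(\mathsf{D1}_i)$ closes $\opt_{\Oi}$ under $\RB$-paths.
\item \textbf{D2}--\textbf{D5}, when in $\mathsf{X}$, hold because the corresponding rules were applied to every relevant configuration. For \textbf{D5}, the witness is the $y$ produced by $\dfivei$, and $\dfiveii$ then ensures $\opt_{\Oi}xz$ for all $z \in \RI(y)$.
\end{itemize}

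Next, by induction on formula complexity, we show the truth lemma: if $x:\phi \in \Gamma^{\ast}$ then $M,x \not\Vdash \phi$. The atomic case uses openness: $x:p$ and $x:\neg p$ never both occur, and $V$ is chosen to falsify positive literals. The $\lor$ and $\land$ cases use saturation. For $\Box$, $[i]$ and $\Oi$, the rule produced a witness $y$ with $y:\phi$ on the branch. For $\Diam$, $\lb i\rb$ and $\ODi$, fairness guarantees that $y:\phi \in \Gamma^{\ast}$ for every accessible $y$, because accessibility in $M$ coincides exactly with the path conditions $\dag_3$, $\dag_4$ and with membership in $\rel^{\ast}$. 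Since every atom of the original $\rel$ holds in $M$ and no formula of the original $\Gamma$ is satisfied, $M,I \not\models \Lambda$.

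The main obstacle is the interaction between the path-based side conditions and the closure used to define the relations. We must ensure that the equivalence closures in $M$ do not create accessibilities beyond the path relations used by the rules; this is exactly why paths rather than closures appear in $\dag_3$ and $\dag_4$. We must also ensure that \textbf{D1}, \textbf{D4} and \textbf{D5} hold for relations that are only reached in the limit, which requires a careful fairness bookkeeping over infinitely generated labels.
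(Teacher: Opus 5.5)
Your proposal is correct and follows essentially the same route as the paper. Both build a fair reduction tree, use K\"onig's lemma to obtain an open branch, and construct the same counter-model on the labels of that branch, with $\RB$ and $\RI$ given by $\Diam$- and $\langle i\rangle$-paths, $\RO$ read off directly, frame conditions checked and a truth lemma by induction; the remaining differences (applying $\dfivei$ once per label rather than at every $\dfive$ phase, and admitting finite saturated branches instead of copying top sequents indefinitely) are inessential.
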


\begin{proof}
For the proof we expand on the methods employed in \cite{Neg09}. In brief, we first (1) define a reduction-tree $\RT$ for an arbitrary sequent $\Lambda = \rel \vdash \Gamma$. Either $\RT$ terminates and represents a proof in $\mathsf{G3DS}_{n}\mathsf{X}$, implying the provability of $\Lambda$, or it does not terminate. In the latter case the tree will be infinite and, using K\"onig's Lemma, we therefore know that (at least) one of $\RT$'s branches is infinite. We use this infinite branch to show that (2) a $\mathsf{DS}_{n}\mathsf{X}$-model $M$ can be constructed with an interpretation $I$ such that $M, I \not\models \Lambda$.

(1) The inductive construction of $\RT$ consists of phases, each phase having two cases: (i) if every topmost sequent of every branch of $\RT$ is an initial sequent $\id$ the construction terminates. (ii) If not, then for those open branches, the construction proceeds and we continue applying---when possible---the rules of the calculus in a roundabout fashion. (NB. If no rule can be applied to a top sequent, yet it is not an initial sequent, then we copy the top sequent indefinitely.) We show how the $(\agdia)$ and $\dfive$ rules are applied (bottom-up) below; all remaining cases are similar or simple (cf.~\cite{BerLyo19a,Neg09}).

We first consider the $(\agdia)$ case, and suppose that $m$ top sequents $\Lambda_j = \rel_{j} \vdash \Gamma_{j}$ (with $1 \leq j \leq m$) are open in $\RT$ (i.e. no $\Lambda_j$ is an instance of the $\id$ rule). Let $x_1 : \langle i \rangle \phi_{1},...,x_{k_{j}} : \langle i \rangle \phi_{k_{j}}$ be all labelled formulae in $\Lambda_{j}$ prefixed with a $\langle i \rangle$ modality. Moreover, let $y_{l,1}, \ldots, y_{l,r_{l}} \in Lab(\Lambda_{j})$ s.t. $x_{l} \sim^{\rel_{j}}_{i} y_{l,s}$ (for $1 \leq l \leq k_{j}$ and $1 \leq s \leq r_{l}$). We add $\Lambda_{j+1} = \rel_{j} \vdash y_{1,1} : \phi_{1}, \ldots, y_{1,r_{1}} : \phi_{1}, \ldots, y_{k_{j},1} : \phi_{k_{j}}, \ldots, y_{k_{j},r_{k_{j}}} : \phi_{k_{j}}, \Gamma_{j}$ on top of $\Lambda_{j}$. We apply this procedure for all $i\in Ag$.

For the $\dfive$ case, assume that $m$ top sequents $\Lambda_j = \rel_{j} \vdash \Gamma_{j}$ (with $1 \leq j \leq m$) are still open in $\RT$. First, for all $x_1,...,x_{k_{j}} \in Lab(\Lambda_j)$, we set $\rel_{j+1} := \opt_{\Oi} x_1y_1,...,\opt_{\Oi} x_{k_{j}}y_{k_{j}}, \rel_{j}$, set $\Gamma_{j+1} := \Gamma_{j}$, and add $\Lambda_{j+1} = \rel_{j+1} \vdash \Gamma_{j+1}$ on top of $\Lambda_j$, where $y_1,...,y_{k_{j}}$ are fresh. (NB. This corresponds to applications of $\dfivei$.) Second, for all $z_{1}', \ldots, z_{l_{r}}' \in Lab(\Lambda_{j+1})$ such that $z_{r} \sim^{\rel_{j+1}}_{i} z_{1}'$, $\ldots$, $z_{r} \sim^{\rel_{j+1}}_{i} z_{l_{r}}'$ and $\opt_{\Oi}x_{r}'z_{r}$ was introduced by an application of $\dfivei$ at any stage $s \leq j$ (with $1 \leq r \leq h$), we add $\Lambda_{j+2} = \opt_{\Oi}x_{1}'z_{1}',...,\opt_{\Oi}x_{1}',z_{l_{1}}', \ldots, \opt_{\Oi}x_{h}'z_{1}',...,\opt_{\Oi}x_{h}',z_{l_{h}}', \rel_{j+1} \vdash \Gamma_{j+1}$ on top of $\Lambda_{j+1}$. We apply this procedure for all agents $i\in Ag$.

(2) If the construction of the $\RT$ for $\Lambda$ terminates, we know that the topmost sequents of all branches are initial sequents and hence $\RT$ corresponds to a proof. If $\RT$ does not terminate, the tree is infinite and, with K\"onig's Lemma, we obtain an infinite branch from which we can construct a $\mathsf{DS}_{n}\mathsf{X}$ counter-model for $\Lambda$. Let $\rel_{0} \vdash \Gamma_{0},...,\rel_{j} \vdash \Gamma_{j},...$ be the sequence of sequents from the infinite branch, such that, (i) $\Lambda = \rel_{0} \vdash \Gamma_{0}$ and (ii) $\Lambda^+ = \rel^{+} \vdash \Gamma^{+}$, where $\rel^{+} = \bigcup_{j \geq 0} \rel_j$ and $\Gamma^{+} = \bigcup_{j \geq 0} \Gamma_{j}$. 

We construct a model $M^+ = \langle W, R_{\Box}, \{R_{[i]} | i \in Ag\}, \{R_{\Oi} | i \in Ag\}, V\rangle$ as follows: 
$W := Lab(\Lambda^+)$; $\RB := \{(x,y) \ | \ x \sim^{\rel^{+}}_{\Diam} y\}$;  $\RI := \{(x,y) \ | \ x \sim^{\rel^{+}}_{i} y\}$ (for all $i\in Ag$); $\RO := \{ (x,y) \ | \ \opt_{\Oi} xy \in \rel^+ \}$ (for all $i\in Ag$); last, $x \in V(p)$ iff $x: \overline{p} \in \Gamma^+$. 
It is straightforward to show that $M^{+}$ is a $\mathsf{DS}_{n}\mathsf{X}$-model. We show that $M^{+}$ satisfies \textbf{C2} and \textbf{D5} (assuming that \textbf{D5} $\in \mathsf{X}$). The cases for all other conditions \textbf{C1}, \textbf{C3}, \textbf{D1}, and those in $\mathsf{X}$ are similar or simple.

To show that $M^{+}$ satisfies \textbf{C2} we need to show (i) $\R_{[i]} \subseteq \R_{\Box}$, and (ii) $\RI$ is an equivalence relation. To show (i), assume that $(x,y) \in \RI$. This implies that $x \sim^{\rel^{+}}_{i} y$ holds, which further implies that $x \sim^{\rel^{+}}_{\Diamond} y$ holds by Def.~\ref{def:stit-path} and~\ref{def:sett-path}. Therefore, by the definition of $\RB$ in $M^{+}$ above, $(x,y) \in \RB$. To see that $\RI$ is an equivalence relation, it suffices to observe that the relation is defined relative to $\sim^{\rel^{+}}_{i}$, which is an equivalence relation.


To prove that $M^{+}$ satisfies \textbf{D5}, we assume $x\in W$. By the definition of $\RT$, we know that there exists a $\Lambda_j$ in the infinite branch such that $x \in Lab(\Lambda_j)$. Since the branch is infinite and rules are applied in a roundabout fashion we know that at some point $k>j$ the $\dfive$ step of the $\RT$ procedure must have been applied (and so, $\dfivei$ must have been applied). Hence, $\opt_{\Oi} xy \in \rel_{k+1}$ for $\Lambda_{k+1} = \rel_{k+1} \vdash \Gamma_{k+1}$ with $y$ fresh, implying that $(x,y)\in \RO$. We aim to show that for all $z\in W$, if $(y,z)\in \RI$, then $(x,z)\in\RO$. Take an arbitrary $z\in W$ for which $(y,z)\in\RI$. By the assumption that $(y,z)\in\RI$ and by the definition of $\RT$, we know that at some point $m \geq k+1$ that the $\dfive$ step of the $\RT$ procedure must have been applied (and so, $\dfiveii$ must have been applied) with $y \sim^{\rel_{m}}_{i} z$ for $\Lambda_{m} = \rel_{m} \vdash \Gamma_{m}$. Hence, $\RO xz \in \rel_{m+1}$ in $\Lambda_{m+1} = \rel_{m+1} \vdash \Gamma_{m+1}$, implying that $(x,z) \in \RO$.

Let $I: Lab \mapsto W$ be the identity function (we may assume w.l.o.g. that $Lab = W$). By construction, $M^+$ satisfies each relational atom occurring in $\rel^+$ with $I$, meaning that $M^+$ satisfies each relational atom in $\rel$ with $I$ (recall $\Lambda = \rel \vdash \Gamma$). It can be shown by induction on the complexity of $\phi$ that for any $x: \phi \in \Gamma^+$, $M^{+}, I(x) \not\models \phi$. Consequently, since $\Gamma \subseteq \Gamma^{+}$, 
$M^{+},I \not\models \Lambda$. 
\end{proof}

\begin{theorem}[Completeness]\label{thm:compl-dstitn} If a sequent $\Lambda$ is valid relative to $\mathsf{DS}_{n}\mathsf{X}$, then it is derivable in $\mathsf{G3DS}_{n}\mathsf{X}$.
\end{theorem}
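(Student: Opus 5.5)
This follows from Lemma~\ref{lem:compl-dstitn} by contraposition. Suppose $\Lambda$ is valid relative to $\mathsf{DS}_{n}\mathsf{X}$, i.e.\ for every $\mathsf{DS}_{n}\mathsf{X}$-model $M$ and every interpretation $I$ we have $M, I \models \Lambda$ (Def.~\ref{def:sequent-semantics}). Lemma~\ref{lem:compl-dstitn} says that either $\Lambda$ is provable in $\mathsf{G3DS}_{n}\mathsf{X}$, or there is a $\mathsf{DS}_{n}\mathsf{X}$-model $M$ with an interpretation $I$ such that $M, I \not\models \Lambda$. Validity rules out the second alternative, so $\Lambda$ is derivable in $\mathsf{G3DS}_{n}\mathsf{X}$.

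To make this step legitimate, I would first confirm that the countermodel $M^{+}$ built in the proof of Lemma~\ref{lem:compl-dstitn} is a genuine $\mathsf{DS}_{n}\mathsf{X}$-frame with a valuation. Concretely, this means checking C1, C2, C3, D1, and every property in $\mathsf{X}$.

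The lemma's proof treats C2 and D5 explicitly. The remaining cases would go as follows.
\begin{itemize}
\item C1 holds because $\sim^{\rel^{+}}_{\Diam}$ is an equivalence relation.
\item C3 holds because the $\ioa$ rule is applied in the roundabout construction, so every tuple of $\Diam$-connected labels gets a common $[i]$-successor.
\item D1, D2, D3 and D4 hold because the corresponding rules are applied fairly along the infinite branch. D1 relies on the same closure argument under paths that the lemma gives for D5.
\end{itemize}
The valuation must also be well defined, i.e.\ no label carries both $p$ and $\neg p$ in $\Gamma^{+}$. This holds because the branch is open and labelled formulae persist upward, so such a pair would eventually yield an $\id$ instance.

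The only point needing care is the truth lemma cited at the end of the lemma's proof: every $x:\phi \in \Gamma^{+}$ is false at $x$ in $M^{+}$, proved by induction on $\phi$. The diamond cases, $\Diam$, $\agdia$ and $\ODi$, use fairness of the reduction tree. For example, if $x:\Diam\phi\in\Gamma^{+}$ and $(x,y)\in\RB$, then $x\sim^{\rel_j}_{\Diam} y$ at some stage, so $y:\phi$ is eventually added and is false by the induction hypothesis. The box cases use the eigenvariable atoms introduced in the tree.

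Since the lemma is assumed, the theorem itself needs nothing beyond the contraposition above.
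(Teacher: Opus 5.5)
Your proof is correct and is the paper's argument: the paper also derives completeness directly from Lemma~\ref{lem:compl-dstitn}, since validity rules out the countermodel alternative. Your extra checks on $M^{+}$ belong to the lemma's proof rather than this theorem's, and they are sound. One small caveat: upward persistence holds only for literals and for the diamond-type principal formulae. The principal formulae of $(\vee)$, $(\wedge)$ and the box-type rules are consumed, but your argument only needs persistence for literals and diamond-type formulae, so it is unaffected.
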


\begin{proof}
Follows directly from \ref{lem:compl-dstitn}.
\end{proof}




%
%

\bibliographystyle{deon16}
\bibliography{biblio}

\end{document}